\documentclass[12pt]{article}


\usepackage{eurosym}
\usepackage{graphicx}
\usepackage{enumerate}
\usepackage{amsmath}
\usepackage{amsfonts}
\usepackage{amssymb}
\usepackage{amsfonts}
\usepackage{amssymb,amsmath,amsthm,epsfig,graphicx,natbib,subfigure,hyperref}
\usepackage{geometry}
\usepackage{caption}
\usepackage{color}
\usepackage{enumerate}
\usepackage{setspace}
\usepackage[textwidth=30mm]{todonotes}
\usepackage{sgame, tikz}
\usepackage{multirow,array}
\usepackage{rotating}
\usepackage{graphicx}
\usepackage{subfigure}
\usepackage{float}
\usepackage{verbatim}
\usepackage{bbm}
\usepackage{natbib}
\usepackage[ruled,vlined]{algorithm2e}

\geometry{
    top=2in,      %
    bottom=2in,   %
    left=2in,   %
    right=2in,  %
    includehead,  %
    includefoot   %
}

\definecolor{darkblue}{rgb}{0.0, 0.0, 0.55}
\hypersetup{colorlinks,linkcolor={darkblue},citecolor={darkblue},urlcolor={darkblue}} 
\allowdisplaybreaks

\setcounter{MaxMatrixCols}{10}


\setstretch{1.2}
\setlength{\textwidth}{6.5in} \oddsidemargin=0in
\setlength{\textheight}{9.2in} \setlength{\topmargin}{-.6in}

\bibpunct[: ]{(}{)}{;}{a}{}{,}
\newtheorem{theorem}{Theorem}
\newtheorem*{theorem*}{Theorem}
\newtheorem{lemma}{Lemma}

\newtheorem{corollary}{Corollary}

\newtheorem{definition}{Definition}

\newtheorem*{example*}{Example}

\newtheorem{proposition}{Proposition}

\SetKw{KwAnd}{and}

\setlength{\textheight}{22cm} \setlength{\textwidth}{6.5in}
\setlength{\topmargin}{0pt} \setlength{\evensidemargin}{1pt}
\setlength{\oddsidemargin}{1pt} \setlength{\headsep}{10pt}
\parskip=2mm
\parindent=8mm

\usepackage{soul}
\usepackage[normalem]{ulem}

\title{Privacy Structure and Blackwell Frontier\thanks{We are grateful for valuable comments and discussions from Yang Sun. All errors are our own.}}
\author{Zhang Xu\thanks{School of Economics, Renmin University of China, China. \textit{\ Email}: \href{mailto:xuzhang@ruc.edu.cn}{xuzhang@ruc.edu.cn}} \and Wei Zhao\thanks{School of Economics and Management, Tsinghua University. \textit{\ Email}: \href{mailto:wei.zhao@outlook.fr}{wei.zhao@outlook.fr}}}

\begin{document}

\maketitle

\begin{abstract}
    This paper characterizes the set of feasible posterior distributions subject to graph-based inferential privacy constraint, including both differential and inferential privacy. This characterization can be done through enumerating all extreme points of the feasible posterior set. A connection between extreme posteriors and strongly connected semi-chains is then established. All these semi-chains can be constructed through successive unfolding operations on semi-chains with two partitions, which can be constructed through classical spanning tree algorithm. A sharper characterization of semi-chains with two partitions for differential privacy is provided. 
\end{abstract}

\newpage
    \tableofcontents
\newpage
\section{Introduction}



Information plays a critical role in decision making and effectiveness of mechanisms. Recent technological development facilitates the collection, storage, acquisition and usage of data. At the same time as efficiency improvement, growing concerns on information leakage have been drawn. Various privacy constraints have been imposed to address these concerns. This paper tries to, subject to exogenous privacy constraint, characterize the feasible information which can be disclosed, through characterizing the feasible posterior distributions. 

The significance of this result is both theoretical and practical. A key insight of Bayesian persuasion (c.f. \citealt{kamenica_bayesian_2011}) is that the set of information structures can be mapped to the set of Bayesian-plausible posterior distributions. Since then, information disclosure is often characterized directly on the space of posterior distributions. In this spirit, theoretically, our work identifies the restrictions that privacy constraints impose on posterior distributions. Practically, characterizing the set of feasible posterior distributions guides optimal, privacy-constrained decision-making. This, in turn, enables the endogenous determination of the optimal degree of privacy protection (c.f. \citealt{abowd_economic_2019}). More generally, the characterization of feasible posterior distributions provides a foundation for privacy-constrained information design (c.f. \citealt{schmutte2022information} and \citealt{pan2025differentially}), which can be derived through concavification over the set of feasible posteriors. The second application is on the characterization of feasible data queries subject to privacy constraints on a sensitive variable (For perfect privacy preservation, see \citealt{strack2024privacy}, \citealt{he2021private}; for relaxed constraints, see \citealt{wang2025inferentially}, \citealt{xu2025privacy}). Our companion paper, \cite{xu2025privacy}, shows that this query characterization problem reduces to characterizing the privacy-constrained posterior distributions on the sensitive variable. This characterization of feasible queries is fundamental to optimal decision-making under such constraints (e.g., price discrimination with perfect privacy preservation; \citealt{strack2025non}).

The most intuitive privacy requirement for signals is based on \emph{pairwise indistinguishability}: after observing a signal, the additional information should not allow the receiver to distinguish a pair of realizations $(\theta_i, \theta_j)$ much more precisely than under the prior. Formally, for almost every posterior belief $\mu \in \Delta(\Theta)$ induced by the signal,
\begin{equation}
    e^{-\varepsilon}\frac{\mu_0(\theta_i)}{\mu_0(\theta_j)} \leq \frac{\mu(\theta_i)}{\mu(\theta_j)} \leq e^{\varepsilon}\frac{\mu_0(\theta_i)}{\mu_0(\theta_j)},
\end{equation}
where $\mu_0$ is the prior belief and $\varepsilon \ge 0$ controls the allowable degree of distinguishability. This condition provides a \emph{linear} measure of how much additional information about the pair $(\theta_i, \theta_j)$ the signal may reveal. When this requirement is imposed on all pairs of realizations, it induces $\varepsilon$-inferential privacy (Definition~\ref{def:inferential}). When it is imposed only on neighboring realizations, it induces $\varepsilon$-differential privacy (Definition~\ref{def:differential}). More generally, the pairwise indistinguishability structure can be represented by a graph on $\Theta$, where an edge between two realizations indicates that the information disclosed about this pair must satisfy the privacy requirement. We call this definition of privacy as graph-based inferential privacy.

In this paper, we focus on graph-based inferential privacy. Our objective is to characterize the Blackwell frontier of privacy-preserving signals. Lemma~\ref{lemma:extreme_points} establishes that a privacy-preserving signal is Blackwell-undominated if and only if almost every possible posterior belief is an extreme point. Therefore, to characterize the Blackwell frontier, it suffices to identify the extreme points of the feasible belief set.

Our first result establishes a connection between extreme points and semi-chains of the underlying graph (Theorem~\ref{thm:semi-chain}). An $L$-semi-chain is a specific structure imposed on a graph that partitions its nodes into $L$ levels such that every edge connects either two nodes within the same level or nodes in two adjacent levels. Given an $L$-semi-chain, the posterior assigns to each realization in $l$-level  a probability that is proportional to $l$ times its prior probability.

When the graph is complete, corresponding to  $\varepsilon$-inferential privacy, the definition of a semi-chain implies that only 2-semi-chains exist, and any such partition corresponds to an extreme posterior belief (Corollary~\ref{cor:inferential}). In contrast, for a general graph, the question arises: how can we systematically generate all of its semi-chains? Our second main result shows that any $L$-semi-chain can be generated from a 2-semi-chain through \emph{unfolding} (Theorem~\ref{thm:2-semi-chain}). Unlike $L$-semi-chains with $L \geq 3$, 2-semi-chains have only two levels, so we do not need to enforce the constraint that edges cannot exist between non-adjacent levels. This simplification allows us to generate all 2-semi-chains by enumerating the spanning trees of the graph.

Under the setting of $\varepsilon$-differential privacy, where each realization is a $K$-dimensional vector, $L$-semi-chains exist if and only if $L \leq K+1$ (Proposition~\ref{prop:constrain_level_dp}). In the binary case, where each dimension takes values in $\{0,1
\}$, there exists a unique 2-semi-chain up to reversing the first and second levels (Proposition~\ref{prp:dp_binary_unique}). For the two-dimensional case, we provide an algorithm that generates all 2-semi-chains without duplication (Proposition~\ref{prp:dp_2dim}).


\section{Graph-Based Inferential Privacy}

\emph{Database.} A sensitive database is a random variable $\theta$, distributed according to an interior distribution $\mu_0$ over a finite set $\Theta := \{\theta_1, \ldots, \theta_J\}$. In practice, $\theta$ may be multidimensional and capture all relevant characteristics of an agent or a set of agents, such as history records, income, age, gender, race, address, etc.

\emph{Signals.} A signal is a mapping $\pi: \Theta \to \Delta(S)$, where $S \subseteq \mathbb{R}$ is the signal space. Observing a signal realization $s \in S$ induces a posterior $\mu_s \in \Delta(\Theta)$. The signal $\pi$ thus induces a distribution over posteriors, denoted by $\langle \pi \rangle$.


\emph{Graph-Based Inferential Privacy.} Let $\mathbf{G} = (g_{ij})_{i,j=1}^J \in \{0,1\}^{J \times J}$ denote an undirected, connected graph over the set of nodes $\Theta$. For any pair $(\theta_i, \theta_j) \in \Theta^2$, we set $g_{ij} = 1$ if there is a link between $\theta_i$ and $\theta_j$, and $g_{ij} = 0$ otherwise. The graph is symmetric, i.e., $g_{ij} = g_{ji}$ for all $i,j \in \{1,\ldots,J\}$, and, without loss of generality, $g_{ii} = 0$ for all $i$.
With a slight abuse of notation, we write $(\theta_i, \theta_j) \in \mathbf{G}$ whenever $g_{ij} = 1$.

Given a $\mathbf{G}$ and some $\varepsilon \geq 0$, we define $$\mathcal{M}_{\mathbf{G}}^\varepsilon :=\left\{\mu \in \Delta(\Theta) :  \frac{\mu(\theta_{i})}{\mu_0(\theta_{i})} \leq e^{\varepsilon}\frac{\mu(\theta_{j})}{\mu_0(\theta_{j})} \text{ for all } (\theta_i, \theta_j ) \in \mathbf{G}\right\},$$
which is a closed convex subset of $\Delta(\Theta)$. Let $\operatorname{ext}\mathcal{M}_{\mathbf{G}}^\varepsilon$ as the set of extreme points of $\mathcal{M}_{\mathbf{G}}^\varepsilon$, i.e., $\operatorname{ext}\mathcal{M}_{\mathbf{G}}^\varepsilon := \left\{ \mu \in \mathcal{M}_{\mathbf{G}}^\varepsilon : \nexists \mu' \neq \mu'' \in \mathcal{M}_{\mathbf{G}}^\varepsilon, \, \alpha \in (0,1) \text{ such that } \mu = \alpha \mu' + (1-\alpha) \mu'' \right\}$. For any $\mu \in \operatorname{ext} \mathcal{M}_{\mathbf{G}}^\varepsilon$, we refer to it as an extreme posterior (with respect to $\mathcal{M}_{\mathbf{G}}^\varepsilon$).

\begin{definition}\label{def:inferential}
    A signal $\pi$ is $(\mathbf{G},\varepsilon)$-inferential privacy-preserving if $\left<\pi\right>(\mathcal{M}_{\mathbf{G}}^\varepsilon) = 1$.
\end{definition}


Since $\mathbf{G}$ is undirected and  symmetric, for any $(\theta_i, \theta_j) \in \mathbf{G}$, if $\mu \in \mathcal{M}_{\mathbf{G}}^\varepsilon$, then 
\begin{equation*}
    e^{-\varepsilon} \frac{\mu(\theta_{j})}{\mu_0(\theta_{j})} \leq \frac{\mu(\theta_{i})}{\mu_0(\theta_{i})} \leq e^{\varepsilon}\frac{\mu(\theta_{j})}{\mu_0(\theta_{j})}.
\end{equation*}
When $\varepsilon = 0$, if the database is either $\theta_i$ or $\theta_j$, then the receiver’s posterior belief over $\theta_i$ and $\theta_j$ coincides with the prior, i.e. \begin{equation*}
    \frac{\mu(\theta)}{\mu(\theta_i) + \mu(\theta_j)} = \frac{\mu_0(\theta)}{\mu_0(\theta_i) + \mu_0(\theta_j)}, \quad \forall \theta \in\{\theta_i,\theta_j\}.
\end{equation*} In other words, the receiver cannot make any additional inference from the signal to distinguish between $\theta_i$ and $\theta_j$. When $\varepsilon > 0$ but sufficiently small, this inference remains limited. The graph $\mathbf{G}$ specifies which pairs of information are not allowed to be inferred from the signal.


Graph-based inferential privacy is developed within the framework introduced by \citet{kifer2014pufferfish}. In their framework, one may specify a subset of prior distributions, and the privacy constraints are required to hold only with respect to those priors. In contrast, we focus on a fixed information structure that is independent of the prior belief. Consequently, if the privacy constraints are satisfied under any interior prior, they are satisfied under all priors.

One important special case of graph-based inferential privacy is \emph{differential privacy}, originally proposed by \citet{dwork2006calibrating}. This notion has achieved remarkable success: beyond an extensive and growing academic literature, it has been adopted as a core privacy framework by major institutions such as Google, Microsoft, and the U.S. Census Bureau \citep{abowd2018us}. 
Under differential privacy, only information differences between \emph{neighboring} realizations of the dataset are protected. Specifically, let $\Theta = \prod_{k=1}^K \Theta^{(k)}$, where each $\Theta^{(k)}$ is finite and $\theta = (\theta^{(1)}, \ldots, \theta^{(K)})$. 
Two realizations $\theta_i, \theta_j \in \Theta$ are said to be neighboring if and only if there exists $\hat{k} \in \{1, \ldots, K\}$ such that $$\theta_i^{(\hat{k})} \neq \theta_j^{(\hat{k})}, \quad  \theta_i^{(-\hat{k})} = \theta_j^{(-\hat{k})},$$ where $\theta^{(-k)} := (\theta^{(1)}, \ldots, \theta^{(k-1)}, \theta^{(k+1)}, \ldots, \theta^{(K)}).$ 
We can then define $\mathbf{G}_{\mathrm{diff}}$ as the \emph{differential graph}, where $(\theta_i, \theta_j) \in \mathbf{G}_{\mathrm{diff}}$ if and only if $\theta_i$ and $\theta_j$ are neighboring.

\begin{definition}\label{def:differential}
    A signal $\pi$ is $\varepsilon$-differential privacy if it is $(\mathbf{G}_{\mathrm{diff}},\varepsilon)$-inferential privacy.
\end{definition}

Another special case arises when $\mathbf{G}$ is a complete graph, denoted by $\mathbf{G}^*$, where $(\theta_i,\theta_j) \in \mathbf{G}^*$ for all $i \neq j$. This formulation was introduced by \citet{ghosh2016inferential} and subsequently applied by \citet{wang2025inferentially} to study privacy-preserving information disclosure.

\begin{definition}
    A signal $\pi$ is $\varepsilon$-inferential-preserving if it is $(\mathbf{G}^*, \varepsilon)$-inferential privacy, where $\mathbf{G}^*$ is the complete graph on $\Theta$.
\end{definition}

\subsection{Blackwell Frontier and Extreme Posteriors}

\begin{definition}[Blackwell Dominance]\label{def:blackwell_dominace}
    A signal $\pi$ is Blackwell-dominated by a signal $\pi'$, denoted by $\pi \preceq \pi'$, if for any action set $A$ and measurable function $u : \Theta \times A \to \mathbb{R}$,
    \begin{equation*}
        \mathbb{E}_{s\sim \pi}\left[\sup_{a \in A} \mathbb{E}[u(\theta,a)|s]\right] \leq \mathbb{E}_{s\sim \pi'}\left[\sup_{a \in A} \mathbb{E}[u(\theta,a)|s]\right].
    \end{equation*}
    A signal $\pi$ is Blackwell-equivalent to $\pi'$, written $\pi \sim \pi'$, if $\pi \preceq \pi'$ and $\pi' \preceq \pi$. It is strictly Blackwell-dominated, written $\pi \prec \pi'$, if $\pi \preceq \pi'$ but not $\pi' \preceq \pi$.
\end{definition}

Denote $\Pi_{\mathcal{M}_{\mathbf{G}}^\varepsilon}$ as the set of $({\mathbf{G}},\varepsilon)$-inferential privacy-preserving signals and $\overline{\Pi}_{\mathcal{M}_{\mathbf{G}}^\varepsilon}$ is its Blackwell frontier, i.e.,
$$\overline{\Pi}_{\mathcal{M}_{\mathbf{G}}^\varepsilon} := \bigl\{ \pi \in \Pi_{\mathcal{M}_{\mathbf{G}}^\varepsilon} : \nexists \, \pi' \in \Pi_{\mathcal{M}_{\mathbf{G}}^\varepsilon} \text{ such that } \pi \prec \pi' \bigr\}.$$

\begin{lemma}\label{lemma:extreme_points}
\begin{enumerate}
    \item A signal is $(\mathbf{G},\varepsilon)$-inferential privacy-preserving if and only if it is Blackwell-dominated by some signal in $\overline{\Pi}_{\mathcal{M}_{\mathbf{G}}^\varepsilon}$.
    \item A signal $\pi \in \overline{\Pi}_{\mathcal{M}_{\mathbf{G}}^\varepsilon}$ if and only if $\left<\pi\right>(\operatorname{ext}\mathcal{M}_{\mathbf{G}}^\varepsilon) = 1$.
\end{enumerate}
\end{lemma}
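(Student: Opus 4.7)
The plan is to recast everything in the Bayesian-persuasion language: signals correspond to Bayesian-plausible distributions $\langle \pi \rangle$ on $\Delta(\Theta)$, two signals are Blackwell-equivalent iff they induce the same distribution of posteriors, and $\pi \preceq \pi'$ iff $\langle \pi' \rangle$ is a mean-preserving spread of $\langle \pi \rangle$, i.e., $\langle \pi' \rangle = \int \kappa(\cdot \mid \mu)\, d\langle \pi \rangle(\mu)$ for some Markov kernel $\kappa(\cdot \mid \mu)$ with mean $\mu$. The two structural facts I will lean on are that $\mathcal{M}_{\mathbf{G}}^\varepsilon$ is a convex polytope cut out from $\Delta(\Theta)$ by finitely many linear inequalities, so $\operatorname{ext}\mathcal{M}_{\mathbf{G}}^\varepsilon$ is finite, and that by Minkowski's theorem every $\mu \in \mathcal{M}_{\mathbf{G}}^\varepsilon$ is a convex combination of points of $\operatorname{ext}\mathcal{M}_{\mathbf{G}}^\varepsilon$.

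I would tackle the four directions in the order \emph{(a)} ``if'' of part~1, \emph{(b)} ``if'' of part~2, \emph{(c)} ``only if'' of part~1, \emph{(d)} ``only if'' of part~2. Step (a) is immediate from convexity of $\mathcal{M}_{\mathbf{G}}^\varepsilon$: any garbling of a privacy-preserving signal averages its posteriors inside the convex set and therefore stays privacy-preserving. For (b), suppose $\pi \preceq \pi'$ with $\pi' \in \Pi_{\mathcal{M}_{\mathbf{G}}^\varepsilon}$ and $\langle \pi \rangle$ supported on $\operatorname{ext}\mathcal{M}_{\mathbf{G}}^\varepsilon$; the kernel $\kappa(\cdot \mid \mu)$ in the spread representation is supported on $\mathcal{M}_{\mathbf{G}}^\varepsilon$ and has mean $\mu$, but at a vertex of the polytope there is a linear functional uniquely maximized at $\mu$ on $\mathcal{M}_{\mathbf{G}}^\varepsilon$, which forces $\kappa(\cdot \mid \mu) = \delta_\mu$; hence $\langle \pi \rangle = \langle \pi' \rangle$ and $\pi \sim \pi'$, so $\pi \in \overline{\Pi}_{\mathcal{M}_{\mathbf{G}}^\varepsilon}$.

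For (c), given $\pi \in \Pi_{\mathcal{M}_{\mathbf{G}}^\varepsilon}$, Minkowski gives, for each $\mu \in \mathcal{M}_{\mathbf{G}}^\varepsilon$, a distribution $\tau_\mu$ on $\operatorname{ext}\mathcal{M}_{\mathbf{G}}^\varepsilon$ with mean $\mu$; define $\pi^*$ via $\langle \pi^* \rangle := \int \tau_\mu\, d\langle \pi \rangle(\mu)$. This is Bayesian plausible, supported on $\operatorname{ext}\mathcal{M}_{\mathbf{G}}^\varepsilon$, and Blackwell-dominates $\pi$ by construction, and by step~(b) it lies in the frontier. Step (d) runs the same construction in reverse: if $\pi \in \overline{\Pi}_{\mathcal{M}_{\mathbf{G}}^\varepsilon}$ but $\langle \pi \rangle$ charges $\mathcal{M}_{\mathbf{G}}^\varepsilon \setminus \operatorname{ext}\mathcal{M}_{\mathbf{G}}^\varepsilon$ with positive probability, then $\pi^*$ built as above satisfies $\pi \preceq \pi^*$ and $\langle \pi^* \rangle \neq \langle \pi \rangle$ (the former puts all mass on extreme points, the latter does not), so $\pi \prec \pi^*$, contradicting the frontier property.

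The main obstacle I anticipate is ensuring the selection $\mu \mapsto \tau_\mu$ is measurable, so that $\langle \pi^* \rangle$ is a bona fide probability measure. Since $\mathcal{M}_{\mathbf{G}}^\varepsilon$ is a finite-dimensional polytope with finitely many vertices, I plan to sidestep general selection theorems by fixing a triangulation of $\mathcal{M}_{\mathbf{G}}^\varepsilon$ into simplices with vertices in $\operatorname{ext}\mathcal{M}_{\mathbf{G}}^\varepsilon$, assigning each $\mu$ to a single simplex via a deterministic tie-breaking rule, and taking $\tau_\mu$ to be the atomic measure on that simplex's vertices weighted by the barycentric coordinates of $\mu$; this is piecewise affine in $\mu$, hence Borel measurable, and places full mass on $\operatorname{ext}\mathcal{M}_{\mathbf{G}}^\varepsilon$ as required.
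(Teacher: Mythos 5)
Your proposal is correct, and its logical skeleton (garbling stays inside the convex set; a dilation concentrated at extreme points must be degenerate; every feasible posterior decomposes over extreme points, giving a dominating frontier signal) matches the paper's. Where you genuinely diverge is in the technical engine for the decomposition step. The paper proves the lemma for an arbitrary compact convex $\mathcal{M}\subseteq\Delta(\Theta)$, so it invokes Choquet's theorem to represent each $\mu$ as a barycenter of a measure on $\operatorname{ext}\mathcal{M}$ and the Kuratowski--Ryll-Nardzewski theorem to select this representation measurably; this generality is what lets the result extend beyond the polytope case (e.g., to infinite $\Theta$ in the companion paper). You instead exploit that $\mathcal{M}_{\mathbf{G}}^\varepsilon$ is a finite-dimensional polytope: Minkowski's theorem gives the decomposition, a fixed triangulation with a deterministic tie-break gives a piecewise-affine (hence Borel) selection $\mu\mapsto\tau_\mu$, and an exposing linear functional at each vertex replaces the paper's appeal to nondegeneracy of the dilation. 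Your route is more elementary and entirely self-contained for the problem at hand, at the cost of not covering the general compact convex case the paper's appendix actually treats. One small point worth making explicit in a write-up: after constructing $\pi^*$ you conclude $\pi\prec\pi^*$ from $\langle\pi^*\rangle\neq\langle\pi\rangle$; this uses the standard fact that Blackwell equivalence coincides with equality of the induced posterior distributions, which you should state (it is the same identification the paper relies on when it equates signals with their posterior distributions).
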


According to Lemma~\ref{lemma:extreme_points}: (i) to characterize all $(\mathbf{G},\varepsilon)$-inferentially private signals, it suffices to describe their Blackwell frontier, since any other privacy-preserving signal can be obtained by garbling \citep{blackwell1953equivalent}; (ii) to characterize the Blackwell frontier, it is enough to identify all extreme posteriors, as $\overline{\Pi}_{\mathcal{M}_\mathbf{G}^\varepsilon}$ can be generated by convex combinations of extreme posteriors with respect to $\mu_0$.


\subsection{Notation}

To streamline the subsequent analysis, we introduce several graph-theoretic concepts that will be used throughout the paper.

\begin{definition}[Spanning Tree]
    A graph $\mathbf{T}$ is said to be a spanning tree of $\mathbf{G}$ if it is a connected subgraph of $\mathbf{G}$ with no cycles.
\end{definition}

Note that a spanning tree always has $J-1$ edges. 

\begin{definition}[Ordered Partition]
    $\{\Theta_l\}_{l=1}^L$ is a $L$-partition of $\Theta$ if $\Theta_l \neq \emptyset$ for all $l \in \{1, \ldots, L\}$, $\Theta_l \cap \Theta_{l'} = \emptyset$ for any $l \neq l'$ and $\cup_{l=1}^L \Theta_l = \Theta$. A partition $\{\Theta_l\}_{l=1}^L$ with an order such that $\Theta_l$ is strictly lower than $\Theta_{l+1}$ for any $l \in \{1, \dots, L-1\}$, is called an ordered partition, denoted by $(\Theta_l)_{l=1}^L$.
\end{definition}

\begin{definition}[Semi-Chain]
    An ordered partition $C^L = (\Theta_l)_{l=1}^L$ with $L \geq 2$ is called an $L$-semi-chain of $\mathbf{G}$, if for any $\theta \in \Theta$, if $\theta \in \Theta_l$, then $N(\theta) \subseteq \Theta_{l-1} \cup \Theta_{l} \cup \Theta_{l+1}$, where $$N(\theta) := \{\theta' \in \Theta: (\theta,\theta') \in \mathbf{G}\}$$
    is the neighboring set of $\theta$ and $\Theta_0, \Theta_{L+1} := \emptyset$. The $l$-th element $\Theta_l$ is referred to as the $l$-level of $C^L$. Given a semi-chain, an edge $(\theta_{i}, \theta_{j}) \in \mathbf{G}$ is called a within-level edge if $\theta_{i}$ and $\theta_{j}$ belong to the same level, and a between-level edge otherwise.

    A semi-chain $C$ is strongly connected if, after deleting all within-level edges, the remaining graph of $\mathbf{G}$ is still connected.
\end{definition}

In \cite{concas2021chained}, a graph is said to be $L$-semi-chained if there exists an $L$-semi-chain based on this graph, and $L$-chained if there exists an $L$-chain, that is, an $L$-semi-chain without within-level edges.

\section{Extreme Points of Feasible Posteriors}

In this section, we establish the connection between the extreme points of $\mathcal{M}_{\mathbf{G}}^\varepsilon$ and the strongly connected semi-chains of the graph $\mathbf{G}$. The later can be enumerated by leveraging the spanning trees of the graph $\mathbf{G}$, for which efficient enumeration algorithms are well established in the graph theory literature. To this end, we first associate each $\mu \in \mathcal{M}_{\mathbf{G}}^\varepsilon$ with a weighted directed graph $\mathbf{W} = (w_{ij})_{i,j=1}^J \in \mathbb{R}^{J\times J}$. This representation allows us to reformulate the linear programming problem in terms of $\mathbf{W}$.

Given a $\mu \in \mathcal{M}_\mathbf{G}^\varepsilon$, for any pair $(\theta_{i}, \theta_{j}) \in \Theta^2$, there is a $w_{ij}$ such that
\begin{equation}\label{eq:constraints}
    \delta_{j} = \delta_{i} + w_{ij } \varepsilon,
\end{equation}
where $\delta_j := \ln \mu(\theta_j) - \ln \mu_0(\theta_j)$ for all $j \in \{1, \ldots, J\}$.\footnote{Since $\mathbf{G}$ is connected, when $\varepsilon >0$, for any $\mu \in \mathcal{M}_{\mathbf{G}}^\varepsilon$ and $\theta \in \Theta$, $\mu(\theta) > 0$.} Then, the constraints defining $\mathcal{M}_\mathbf{G}^\varepsilon$ can be expressed as restrictions on $\mathbf{W} = (w_{ij})_{i,j=1}^J \in \mathbb{R}^{J\times J}$, i.e.,
\begin{equation}\label{eq:con_1}
    w_{ij} \in [-1,1] \quad \text{ for all } (\theta_{i}, \theta_{j}) \in \mathbf{G}.
\end{equation}

Given a $\mu \in \Delta(\Theta)$, there is a unique corresponding $\mathbf{W}$; conversely, given a $\mathbf{W}$, there is at most one corresponding $\mu$. However, not every $\mathbf{W}$ can induce a distribution.

\begin{lemma}\label{lemma:feasible_w}
    A matrix $\mathbf{W}$ can be obtained by $\mu \in \mathcal{M}_\mathbf{G}^\varepsilon$ if and only if 
    it satisfies \eqref{eq:con_1} and
    \begin{align}
    w_{ij}  = - w_{ji} & \quad   \text{ for any } i,j \in \{1,\ldots, J\}, \label{eq:con_2} \\
    w_{j_1j_r}  = \sum_{r' =1}^{r-1} w_{j_{r'} j_{r'+1}} & \quad  \text{ for any } j_1, j_2, \ldots, j_r \in \{1,\ldots,J\}.\label{eq:con_3}
\end{align}
\end{lemma}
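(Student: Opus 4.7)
The plan is to prove both implications directly from the definition of $w_{ij}$, namely $w_{ij} = (\delta_j - \delta_i)/\varepsilon$ where $\delta_j = \ln \mu(\theta_j) - \ln \mu_0(\theta_j)$.

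For the ``only if'' direction, suppose $\mathbf{W}$ is induced by some $\mu \in \mathcal{M}_\mathbf{G}^\varepsilon$. Condition \eqref{eq:con_1} follows from the defining inequality of $\mathcal{M}_\mathbf{G}^\varepsilon$: for $(\theta_i, \theta_j) \in \mathbf{G}$, the two-sided privacy bound rewrites as $|\delta_j - \delta_i| \leq \varepsilon$, hence $w_{ij} \in [-1,1]$. Condition \eqref{eq:con_2} is immediate from $w_{ij}\varepsilon = \delta_j - \delta_i = -(\delta_i - \delta_j) = -w_{ji}\varepsilon$. Condition \eqref{eq:con_3} is the telescoping identity
\begin{equation*}
w_{j_1 j_r}\varepsilon = \delta_{j_r} - \delta_{j_1} = \sum_{r'=1}^{r-1}(\delta_{j_{r'+1}} - \delta_{j_{r'}}) = \sum_{r'=1}^{r-1} w_{j_{r'} j_{r'+1}}\varepsilon,
\end{equation*}
so dividing by $\varepsilon$ yields the claim (with the $\varepsilon=0$ case handled by interpreting $\mathbf{W}$ as any matrix consistent with the equalities $\delta_j = \delta_i$).

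For the ``if'' direction, I will construct $\mu$ from $\mathbf{W}$ by anchoring at a reference node. Fix $\theta_1 \in \Theta$, set $\tilde{\delta}_j := w_{1j}\varepsilon$ for every $j$, and define
\begin{equation*}
\mu(\theta_j) := \frac{\mu_0(\theta_j) e^{\tilde{\delta}_j}}{Z}, \qquad Z := \sum_{k=1}^J \mu_0(\theta_k) e^{\tilde{\delta}_k}.
\end{equation*}
Since $\mu_0$ is interior, $Z > 0$ and $\mu$ is a well-defined interior probability distribution on $\Theta$. The key consistency check is that the matrix $\mathbf{W}'$ induced by $\mu$ coincides with the original $\mathbf{W}$. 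Writing $\delta_j = \tilde{\delta}_j - \ln Z$, the constant shift cancels in differences, so $w'_{ij}\varepsilon = \delta_j - \delta_i = \tilde{\delta}_j - \tilde{\delta}_i = (w_{1j} - w_{1i})\varepsilon$. Applying \eqref{eq:con_3} to the length-two path $(i,1,j)$ together with \eqref{eq:con_2} gives $w_{ij} = w_{i1} + w_{1j} = -w_{1i} + w_{1j}$, which matches $w'_{ij}$. Finally, \eqref{eq:con_1} gives $|w'_{ij}| \leq 1$ on every edge of $\mathbf{G}$, which translates back into the inequality defining $\mathcal{M}_\mathbf{G}^\varepsilon$; hence $\mu \in \mathcal{M}_\mathbf{G}^\varepsilon$.

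The only subtle point, and the step I expect to require the most care in the write-up, is the verification that the construction in the sufficiency direction is genuinely consistent: condition \eqref{eq:con_3} is precisely what ensures that $w_{ij}$ is ``path-independent'' and can therefore be recovered from a scalar potential $\delta_j$. Everything else reduces to bookkeeping between the additive $\delta$-representation and the multiplicative $\mu$-representation.
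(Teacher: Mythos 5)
Your proof is correct. The necessity direction is identical to the paper's: the same telescoping computation on the potentials $\delta_j = \ln\mu(\theta_j) - \ln\mu_0(\theta_j)$, with the two-sided privacy bound giving \eqref{eq:con_1}. The sufficiency direction is executed differently, though. The paper fixes a sequence of indices $j_1,\ldots,j_J$, writes down the $J-1$ ratio constraints along that sequence together with the normalization $\sum_j \mu(\theta_j)=1$, and argues that this linear system has a unique solution $\hat\mu$ which must be a positive probability vector; it then invokes the fact that the remaining entries of $\mathbf{W}$ are pinned down by \eqref{eq:con_2}--\eqref{eq:con_3} to conclude $\hat{\mathbf{W}}=\mathbf{W}$. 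You instead give the solution in closed form by anchoring at a reference node, setting $\tilde\delta_j = w_{1j}\varepsilon$ and normalizing $\mu_0(\theta_j)e^{\tilde\delta_j}$; positivity is then immediate from interiority of $\mu_0$, and the consistency check $\mathbf{W}'=\mathbf{W}$ reduces to applying \eqref{eq:con_3} to the two-step sequence $(i,1,j)$ (which is legitimate, since \eqref{eq:con_3} is stated for arbitrary index sequences, not only paths in $\mathbf{G}$). Both arguments hinge on the same point you flag --- \eqref{eq:con_3} is exactly path-independence, so $\mathbf{W}$ is determined by a single row --- but your explicit construction avoids the paper's indirect positivity argument and is, if anything, cleaner. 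Two minor points to tidy in a final write-up: note (as the paper does in a footnote) that connectedness of $\mathbf{G}$ forces $\mu(\theta)>0$ for all $\theta$, so the $\delta_j$ in the necessity direction are well defined; and the $\varepsilon=0$ degenerate case, which you mention parenthetically, deserves one explicit sentence since the $\mu\mapsto\mathbf{W}$ correspondence is no longer unique there.
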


Here, equations \eqref{eq:con_2} and \eqref{eq:con_3} characterize the interdependence among the entries. Since a $\mu \in \Delta(\Theta)$ only have $(J-1)$ freedom, once the values of a sequence $w_{j_1 j_2}, \ldots, w_{j_{J-1} j_J}$ are fixed, all other entries of $\mathbf{W}$ are endogenously determined by \eqref{eq:con_2} and \eqref{eq:con_3}.\footnote{By permuting $\mathbf{W}$, specifying the values of  $w_{j_1 j_2}, w_{j_2 j_3}, \ldots, w_{j_{J-1} j_J}$  is equivalent to specifying the entries of the first superdiagonal $(w_{j,j+1})$.  The second superdiagonal $(w_{j,j+2})$ can then be determined from $(w_{j,j+1})$ via~\eqref{eq:con_3}, and the third superdiagonal can be determined from the second, and so on. By~\eqref{eq:con_2} the diagonal entries are equal to zero and the lower triangular part can be determined by the upper counterpart.} By the correspondence between basic feasible solutions and extreme points (Theorem 3.1, pp.102, \cite{bazaraa2011linear}), each extreme point of $\mathcal{M}_\mathbf{G}^\varepsilon$ is uniquely characterized by $(J-1)$ independent entries $w_{ij}$ satisfying $|w_{ij}| = 1$. This assertion can be expressed as Lemma \ref{lemma:connecteed_acyclic}.

\begin{lemma}\label{lemma:connecteed_acyclic}
    Suppose $\mathbf{W}$ corresponds to some $\mu\in \mathcal{M}_\mathbf{G}^\varepsilon$. $\mu \in \operatorname{ext} \mathcal{M}_\mathbf{G}^\varepsilon$ if and only if there exists a spanning tree $\mathbf{T}$ of $\mathbf{G}$ such that $w_{ij} =\pm 1$ for every $(\theta_{i}, \theta_{j}) \in \mathbf{T}$.
\end{lemma}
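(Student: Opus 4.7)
The plan is to reduce the claim to the standard correspondence between extreme points of a polyhedron and basic feasible solutions of the associated linear program, as invoked by the authors via \cite{bazaraa2011linear}. Since each ratio constraint $\mu(\theta_i)\mu_0(\theta_j) \leq e^\varepsilon \mu_0(\theta_i)\mu(\theta_j)$ for $(\theta_i,\theta_j) \in \mathbf{G}$ is linear in $\mu$, the feasibility set $\mathcal{M}_\mathbf{G}^\varepsilon$ is a polytope in the $(J-1)$-dimensional affine simplex $\{\mu \in \mathbb{R}^J : \sum_j \mu(\theta_j) = 1\}$. Because $\mathbf{G}$ is connected and $\mu_0$ is interior (as noted in the footnote), every $\mu \in \mathcal{M}_\mathbf{G}^\varepsilon$ has full support, so the nonnegativity constraints are never binding. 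Hence $\mu$ is extreme if and only if there are $J-1$ ratio constraints binding at $\mu$ whose gradients are linearly independent on the tangent space $T := \{\xi \in \mathbb{R}^J : \sum_j \xi_j = 0\}$ of the simplex.

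Next, I would identify the gradient space with the classical edge-incidence vectors. The ratio constraint indexed by $(\theta_i,\theta_j) \in \mathbf{G}$ binds precisely when $w_{ij} = \pm 1$, and the corresponding linear equation $\mu_0(\theta_j)\mu(\theta_i) = e^{\pm\varepsilon}\mu_0(\theta_i)\mu(\theta_j)$ has a gradient that, after rescaling by the positive diagonal matrix $D = \operatorname{diag}(\mu(\theta_1),\ldots,\mu(\theta_J))$ (invertible by full support), reduces up to a positive scalar to $\pm(e_j - e_i) \in T$. Consequently, the gradients of any family of binding constraints indexed by edges $E \subseteq \mathbf{G}$ are linearly independent in $T$ if and only if the incidence vectors $\{e_j - e_i : (\theta_i,\theta_j) \in E\}$ are linearly independent in $\mathbb{R}^J$. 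By an elementary linear-algebraic fact, the span of the latter has dimension $J - c(E)$, where $c(E)$ is the number of connected components of the subgraph $(\Theta, E)$; hence one obtains $J-1$ linearly independent binding constraints if and only if $c(E) = 1$, equivalently, $E$ contains a spanning tree of $\mathbf{G}$.

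Combining the two steps delivers both directions of the lemma: extremality of $\mu$ forces the binding edges to contain a spanning tree $\mathbf{T}$ of $\mathbf{G}$ on which every $w_{ij}$ equals $\pm 1$, and conversely the existence of such a $\mathbf{T}$ yields $J-1$ linearly independent binding gradients and hence a basic feasible solution. I do not expect a serious obstacle beyond careful bookkeeping; the only places warranting mild care are the diagonal rescaling that reduces the weighted gradients to the clean edge vectors $e_j - e_i$, and the combinatorial identity between $\dim \operatorname{span}\{e_j - e_i : (\theta_i,\theta_j) \in E\}$ and $J - c(E)$, which is standard and can be established by a short induction on leaves.
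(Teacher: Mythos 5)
Your proposal is correct and follows essentially the same route as the paper: both reduce extremality to the basic-feasible-solution criterion, observe that the simplex constraint is always binding and nonnegativity never is, and then equate ``$J-1$ linearly independent binding ratio constraints'' with ``the binding edges contain a spanning tree.'' The only difference is cosmetic: you certify that last equivalence via the rank formula $\dim\operatorname{span}\{e_j-e_i:(\theta_i,\theta_j)\in E\}=J-c(E)$ for graph incidence vectors (after the diagonal rescaling), whereas the paper argues in the $w$-representation that cycles create linear dependencies through its conditions \eqref{eq:con_2} and \eqref{eq:con_3} — these are the same underlying fact.
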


The prerequisite of Lemma~\ref{lemma:connecteed_acyclic} is that the matrix $\mathbf{W}$ is induced by some $\mu \in \mathcal{M}_{\mathbf{G}}^\varepsilon$. Once this holds, Lemma~\ref{lemma:connecteed_acyclic} allows us to verify whether $\mathbf{W}$ corresponds to an extreme posterior. However, note that for a given spanning tree $\mathbf{T}$ of $\mathbf{G}$, not every arbitrary assignment of $\pm 1$ weights to its edges yields a valid matrix $\mathbf{W}$ via \eqref{eq:con_2} and \eqref{eq:con_3}, as the resulting $\mathbf{W}$ may violate condition~\eqref{eq:con_1}. Figure~\ref{fig:weighted_spanning_tree} illustrates this point.

\begin{figure}[t]
    \centering
    \includegraphics[width=0.9\linewidth]{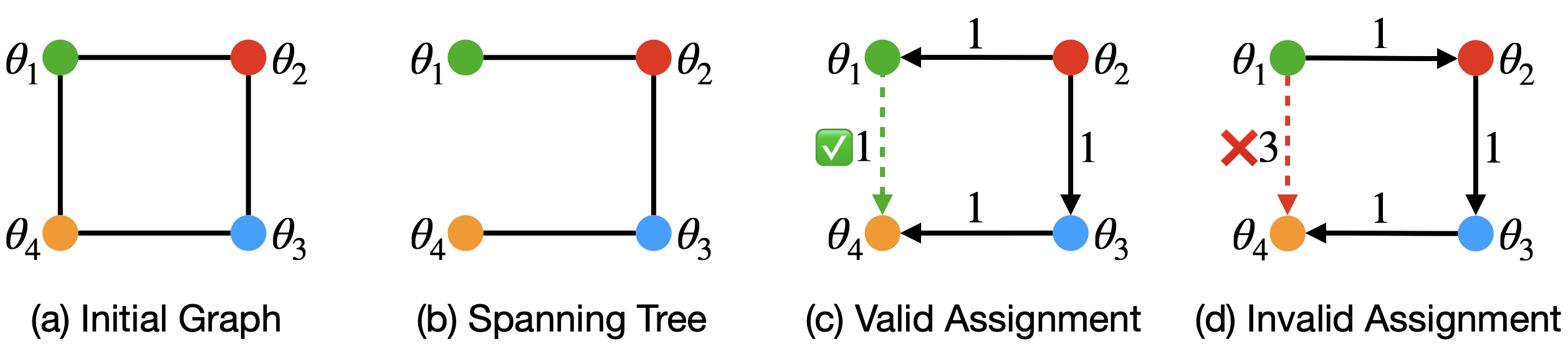}
    \caption{Illustration of Weight Assignments on a Spanning Tree}
    \label{fig:weighted_spanning_tree}

    \vspace{2mm}
    {\small \textit{Note:} Panels (c) and (d) illustrate two different weight assignments on the spanning tree shown in Panel (b). In both panels, the notation “$\theta_i \overset{1}{\rightarrow} \theta_j$” indicates that $w_{ij} = 1$. In Panel (c), the assignment induces $w_{14} = 1$, which satisfies condition~\eqref{eq:con_1}, whereas in Panel (d), $w_{14} = 3 > 1$ violates condition~\eqref{eq:con_1}.}
\end{figure}

To construct all valid weight assignments on a spanning tree, we now shift our focus to the concept of semi-chains.

\subsection{Connection Between Extreme Posteriors and Semi-Chains}

\begin{theorem}\label{thm:semi-chain}
    A posterior $\mu \in \operatorname{ext} \mathcal{M}_\mathbf{G}^\varepsilon$ if and only if there exists a strongly connected $L$-semi-chain $C=(\Theta_l)_{l =1}^L$ such that the corresponding matrix $\mathbf{W}$ of $\mu$ satisfies
    \begin{equation*}
        w_{i j} =
        \begin{cases}
        1 & \text{if there exists } l \in \{1,\ldots, L-1\} \text{ such that } \theta_{i} \in \Theta_l, \theta_{j} \in \Theta_{l+1} \\
        0 & \text{if there exists } l \in\{1,\ldots,L\} \text{ such that } \theta_{i}, \theta_{j} \in \Theta_l
        \end{cases}.
    \end{equation*}
\end{theorem}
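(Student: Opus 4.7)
The plan is to trade the $\mathbf{W}$-language for the log-likelihood variables $\delta_j := \ln\mu(\theta_j) - \ln\mu_0(\theta_j)$, because \eqref{eq:constraints} then reads simply $w_{ij} = (\delta_j - \delta_i)/\varepsilon$. Under this reparametrization, condition \eqref{eq:con_1} becomes $|\delta_i - \delta_j| \leq \varepsilon$ on every $\mathbf{G}$-edge, identities \eqref{eq:con_2}–\eqref{eq:con_3} are automatic, and the ``spanning tree of $\pm 1$ edges'' delivered by Lemma~\ref{lemma:connecteed_acyclic} translates into the statement that $\delta$ changes by exactly $\pm\varepsilon$ along each edge of some spanning tree $\mathbf{T}$. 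This discrete picture is the pivot on which both directions rest.

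For the necessity direction, I take an extreme $\mu$ and extract $\mathbf{T}$ from Lemma~\ref{lemma:connecteed_acyclic}. Iterating $\delta_j = \delta_i \pm \varepsilon$ along the unique $\mathbf{T}$-path from a fixed root to each $\theta_j$ shows that $\delta_j/\varepsilon$ lies in a single translate of $\mathbb{Z}$. I group the nodes by this integer value and order the groups from smallest to largest, obtaining a candidate ordered partition $(\Theta_l)_{l=1}^L$. For any $\mathbf{G}$-edge $(\theta_i,\theta_j)$, \eqref{eq:con_1} forces $|w_{ij}| \leq 1$, yet $w_{ij}$ is an integer by the lattice property just derived, so $w_{ij} \in \{-1,0,1\}$. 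Reading this off in terms of levels is exactly the semi-chain condition $N(\theta) \subseteq \Theta_{l-1} \cup \Theta_l \cup \Theta_{l+1}$, and the prescribed values of $w_{ij}$ for within- and between-level $\mathbf{G}$-edges hold on the nose. Strong connectivity is a free byproduct: each edge of $\mathbf{T}$ has $w_{ij} = \pm 1$, so $\mathbf{T}$ sits entirely among between-level edges of $\mathbf{G}$ and already spans $\Theta$.

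For the sufficiency direction, I reverse-engineer $\mu$ from a given strongly connected $L$-semi-chain. Writing $l(\theta_j)$ for the level index of $\theta_j$, I put $\delta_j := l(\theta_j)\varepsilon - C$, where $C := \ln \sum_{j} \mu_0(\theta_j)\, e^{\, l(\theta_j)\varepsilon}$ is chosen so that $\mu(\theta_j) := \mu_0(\theta_j)\, e^{\delta_j}$ defines a probability distribution. By construction $w_{ij} = l(\theta_j) - l(\theta_i)$, which takes values in $\{-1,0,1\}$ on $\mathbf{G}$-edges by the semi-chain condition, so $\mu \in \mathcal{M}_{\mathbf{G}}^\varepsilon$, its $\mathbf{W}$ satisfies the prerequisites of Lemma~\ref{lemma:feasible_w}, and the explicit formulas in the theorem hold by design. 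Strong connectivity lets me pull a spanning tree of $\mathbf{G}$ from the between-level edges alone; every edge of this tree has $|w_{ij}|=1$, so Lemma~\ref{lemma:connecteed_acyclic} certifies $\mu \in \operatorname{ext}\mathcal{M}_{\mathbf{G}}^\varepsilon$.

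The main obstacle I anticipate is the integer-valuedness step in the necessity direction: promoting the cheap inequality $|w_{ij}| \leq 1$ into the discrete fact $w_{ij} \in \{-1,0,1\}$ requires that every node of $\Theta$ lie on a common $\varepsilon$-scaled lattice, which in turn hinges on iterating \eqref{eq:con_3} along spanning-tree paths and using connectedness of $\mathbf{G}$ to reach every node. Once that lattice statement is established, the identification of $\delta$-level-sets with the levels $\Theta_l$ is mechanical, and strong connectivity follows for free from the same spanning tree. The remaining bookkeeping --- verifying the ordering of the partition, checking that $\Theta_{L+1}$ and $\Theta_0$ are empty, and normalizing $\mu$ to a probability in the converse --- is routine.
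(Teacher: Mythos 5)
Your proposal is correct and follows essentially the same route as the paper: both directions pivot on Lemma~\ref{lemma:connecteed_acyclic}, the necessity direction builds the levels by propagating $\pm\varepsilon$ increments of $\delta$ along the spanning tree exactly as the paper does, and the sufficiency direction differs only in that you construct $\mu$ explicitly via $\mu(\theta_j)\propto \mu_0(\theta_j)e^{l(\theta_j)\varepsilon}$ where the paper instead verifies cycle-consistency of $\mathbf{W}$ in Lemma~\ref{lemma:c_C}. These are interchangeable ways of showing the semi-chain's $\mathbf{W}$ is realized by a feasible posterior, so the argument is sound as written.
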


In Theorem~\ref{thm:semi-chain}, given an $L$-semi-chain $C$, the corresponding matrix $\mathbf{W}$ is determined endogenously:
the weight of any within-level edge equals $0$, and the absolute value of the weight of any between-level edge equals $1$, with direction oriented from the lower to the higher level.
All remaining entries of $\mathbf{W}$ are then uniquely determined by the condition~\eqref{eq:con_3}.
In what follows, unless stated otherwise, we say that a posterior $\mu$ is generated by $C$, implicitly referring to the $\mathbf{W}$ defined as in Theorem~\ref{thm:semi-chain}.

Since there are no edges between nodes in non-adjacent levels of a semi-chain, the corresponding $\mathbf{W}$ automatically satisfies condition~\eqref{eq:con_1}. Moreover, every strongly connected semi-chain contains a spanning tree whose edges have weights $\pm 1$. Therefore, a semi-chain naturally determines a valid weight assignment on a spanning tree.

Theorem~\ref{thm:semi-chain} establishes a one-to-one correspondence between the extreme points of $\operatorname{ext} \mathcal{M}_{\mathbf{G}}^\varepsilon$ and the set of all strongly connected $L$-semi-chains.
Given an $L$-semi-chain $C = (\Theta_l)_{l=1}^L$, the corresponding posterior distribution $\mu$ is generated by
\begin{equation*}
    \mu(\theta_i) =
\frac{e^{l\varepsilon}\mu_0(\theta_i)}
{\sum_{l'=1}^L e^{l'\varepsilon}\mu_0(\Theta_{l'})},
\quad \text{if } \theta_i \in \Theta_l.
\end{equation*}
That is, for any $\theta \in \Theta_l$, the posterior probability is proportional to $e^{l\varepsilon}\mu_0(\theta)$, up to normalization. For example, suppose $C = (\Theta_l)_{l=1}^3$ with $\Theta_1 = \{\theta_1\}$, $\Theta_2 = \{\theta_2\}$, and $\Theta_3 = \{\theta_3\}$ is a 3-semi-chain of some graph. Then the corresponding posterior $\mu$ is given by
\begin{align*}
    \mu(\theta_1) &= \frac{\mu_0(\theta_1)}{\mu_0(\theta_1) + e^\varepsilon \mu_0(\theta_2) + e^{2\varepsilon}\mu_0(\theta_3)}, \\
    \mu(\theta_2) &= \frac{e^\varepsilon \mu_0(\theta_2)}{\mu_0(\theta_1) + e^\varepsilon \mu_0(\theta_2) + e^{2\varepsilon}\mu_0(\theta_3)}, \\
    \mu(\theta_3) &= \frac{e^{2\varepsilon} \mu_0(\theta_3)}{\mu_0(\theta_1) + e^\varepsilon \mu_0(\theta_2) + e^{2\varepsilon}\mu_0(\theta_3)}.
\end{align*}

When we consider $\varepsilon$-inferential privacy, since $\mathbf{G}^*$ is a complete graph, only 2-semi-chains exist. Moreover, any partition of $\Theta$ can generate a pair of strongly connected 2-semi-chains.

\begin{corollary}\label{cor:inferential}
    A signal $\pi$ is $\varepsilon$-inferential-privacy-preserving if and only if for almost every $\mu \in \operatorname{supp}(\left<\pi\right>)$, there exists a partition of $\Theta$, $\{\Theta_1, \Theta_2\}$ such that
    \begin{equation*}
        \mu(\theta) = \left\{
        \begin{aligned}
            \frac{e^\varepsilon\mu(\theta)}{e^\varepsilon\mu(\Theta_1) + \mu(\Theta_2)} \quad \text{ if } \theta \in \Theta_1 \\
            \frac{\mu(\theta)}{e^\varepsilon\mu(\Theta_1) + \mu(\Theta_2)} \quad \text{ if } \theta \in \Theta_2
        \end{aligned}.\right.
    \end{equation*}
\end{corollary}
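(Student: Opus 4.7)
The plan is to obtain Corollary~\ref{cor:inferential} as a direct specialization of Theorem~\ref{thm:semi-chain} and Lemma~\ref{lemma:extreme_points} to the case of the complete graph $\mathbf{G}^*$. By Lemma~\ref{lemma:extreme_points}, a signal lies on the Blackwell frontier of $\varepsilon$-inferential-privacy-preserving signals iff $\langle \pi \rangle$ is supported almost surely on $\operatorname{ext}\mathcal{M}_{\mathbf{G}^*}^\varepsilon$. Hence it suffices to show that the extreme posteriors are exactly the distributions of the form displayed in the corollary, indexed by the 2-partitions of $\Theta$.

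The core step is to establish that every strongly connected semi-chain of $\mathbf{G}^*$ has exactly two levels. For the upper bound $L\leq 2$: if $L\geq 3$, pick any $\theta\in\Theta_1$ and $\theta'\in\Theta_3$; since $\mathbf{G}^*$ is complete, $(\theta,\theta')\in\mathbf{G}^*$, which violates the semi-chain adjacency requirement $N(\theta)\subseteq\Theta_0\cup\Theta_1\cup\Theta_2$. Conversely, every ordered 2-partition $(\Theta_1,\Theta_2)$ trivially satisfies the semi-chain property (there are no non-adjacent levels to worry about), and strong connectedness is automatic: after deleting within-level edges, the remaining graph contains the complete bipartite graph $K_{|\Theta_1|,|\Theta_2|}$, which is connected whenever both levels are nonempty. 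Thus the strongly connected semi-chains of $\mathbf{G}^*$ are in bijection with ordered 2-partitions of $\Theta$.

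The remaining step is computational. Applying the explicit posterior formula $\mu(\theta)\propto e^{l\varepsilon}\mu_0(\theta)$ for $\theta\in\Theta_l$ (displayed just before the corollary) to an ordered 2-partition and dividing numerator and denominator by $e^\varepsilon$ yields the formula in the corollary, with the convention that $\Theta_1$ labels the level numbered~$2$ (i.e., the upper level that carries the multiplicative $e^\varepsilon$ factor). I would also include a one-line ``if'' sanity check: any $\mu$ of the stated form satisfies $\mu(\theta_i)/\mu_0(\theta_i)=e^\varepsilon/Z$ for $\theta_i\in\Theta_1$ and $\mu(\theta_j)/\mu_0(\theta_j)=1/Z$ for $\theta_j\in\Theta_2$, so the ratio across levels is exactly $e^{\pm\varepsilon}$ and equals $1$ within a level, confirming $\mu\in\mathcal{M}_{\mathbf{G}^*}^\varepsilon$.

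I do not expect a genuine obstacle; the corollary is essentially a transcription of the two invoked results. The only care needed is in reconciling the two labeling conventions (the corollary places the $e^\varepsilon$-factor on the $\Theta_1$-branch, whereas Theorem~\ref{thm:semi-chain} assigns higher multipliers to higher-indexed levels) and in noting that reversing the two levels of a 2-semi-chain produces a distinct extreme posterior, so the bijection is with \emph{ordered} 2-partitions; the corollary statement, using an unordered partition, implicitly absorbs this by allowing the roles of $\Theta_1$ and $\Theta_2$ to be fixed by which branch of the displayed formula is selected.
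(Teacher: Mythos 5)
Your proposal is correct and follows essentially the same route the paper takes (the paper gives no separate appendix proof, only the remark preceding the corollary that a complete graph admits only 2-semi-chains and that every 2-partition yields a pair of strongly connected ones, which you spell out in full). Your added care about ordered versus unordered partitions and the level-labeling convention, and your observation that the corollary as literally written characterizes the Blackwell frontier (via Lemma~\ref{lemma:extreme_points}) rather than the full set of privacy-preserving signals, are both consistent with the paper's intent.
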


Corollary~\ref{cor:inferential} provides an explicit characterization of all extreme posteriors in $\varepsilon$-inferential privacy. \cite{xu2025privacy} further shows that Corollary~\ref{cor:inferential} continues to hold even when $\Theta$ is an infinite set.

\subsection{Connection Between Semi-Chains and Spanning Trees}

Translating extreme posteriors into strongly connected semi-chains is only the first step; we must also establish how to systematically enumerate all strongly connected semi-chains for a given graph. 

Firstly, for a strongly connected 2-semi-chain, it can be constructed from a spanning tree of the initial graph $\mathbf{G}$. Given such a spanning tree, if one node is assigned to a particular level, then all its neighboring nodes are placed on the opposite level. Since a spanning tree contains no cycles, this assignment is well-defined and uniquely determines a strongly connected 2-semi-chain. Moreover, each spanning tree corresponds to two strongly connected 2-semi-chains, which are inverses of each other. Figure~\ref{fig:spanning_tree} demonstrates this relationship. Therefore, to generate all possible strongly connected 2-semi-chains, it suffices to enumerate all spanning trees of $\mathbf{G}$. Efficient algorithms for generating spanning trees are well established in the literature.

However, for any strongly connected $L$-semi-chain with $L \ge 3$, it cannot be directly generated from a spanning tree. The difficulty lies in satisfying the constraint that no edges exist between nodes belonging to non-adjacent (or identical) levels, which is precisely the issue illustrated in Figure~\ref{fig:weighted_spanning_tree}.

To resolve this issue, we first introduce the operation \emph{unfolding} and then demonstrate that any strongly connected $L$-semi-chain with $L \geq 3$ can be generated from a strongly connected 2-semi-chain through successive applications of unfolding. 

\begin{figure}
    \centering
    \includegraphics[width=0.8\linewidth]{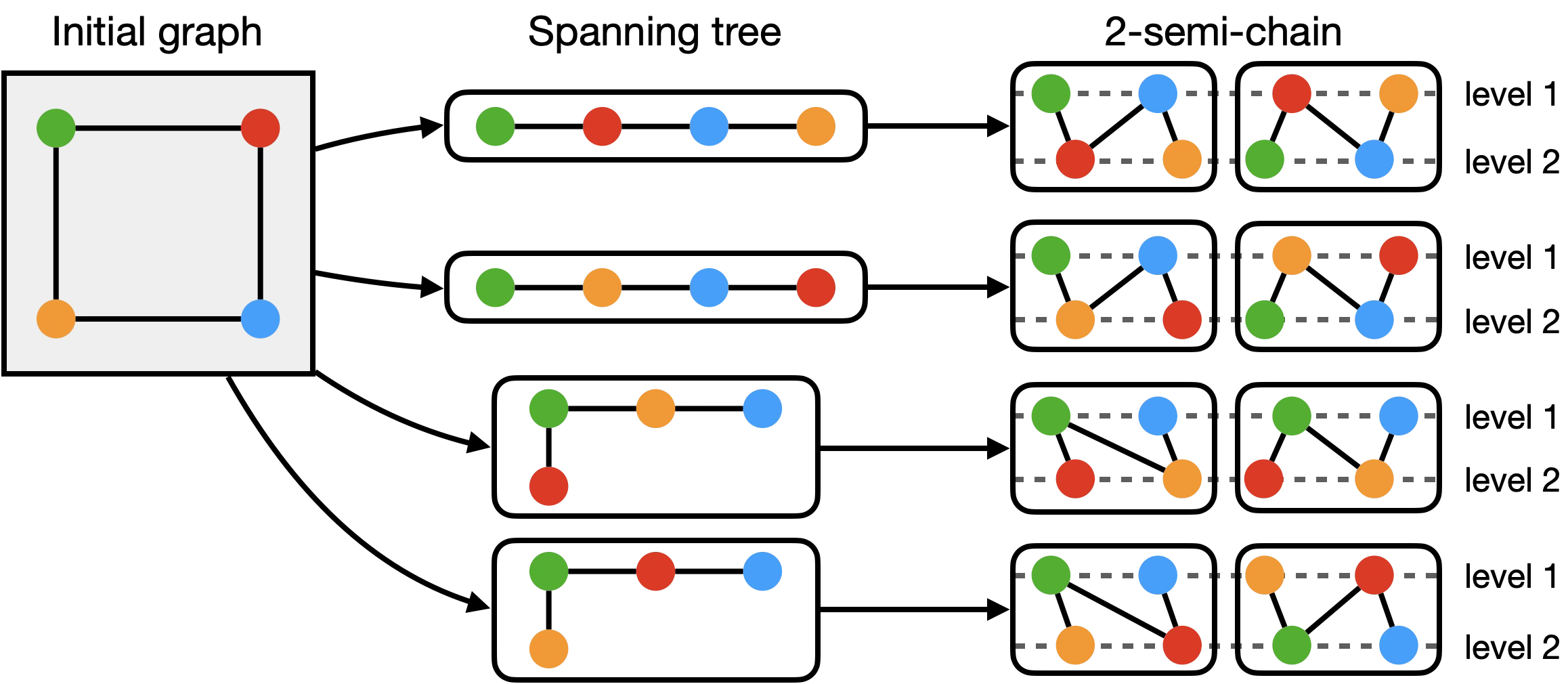}
    \caption{Generating Strongly Connected 2-Semi-Chains Through Spanning Trees}
    \label{fig:spanning_tree}
\end{figure}

\begin{definition}
    Let $C^L = (\Theta_l)_{l=1}^L$ be an $L$-semi-chain for some $L \geq 2$.
    \begin{enumerate}
        \item An ordered partition $(\Theta_{2}^{1}, \Theta_{1}, \Theta_{2}^{2}, (\Theta_l)_{l=3}^L)$ is an upward unfolding of $C^L$ if $\{\Theta_2^1, \Theta_2^2\}$ is a partition of $\Theta_2$ such that $$(\theta_{i}, \theta_{j}) \notin \mathbf{G}, \quad \text{ for all } \theta_{i} \in \Theta_2^1 \text{ and } \theta_{j} \in \Theta_2^2 \cup \Theta_3.$$

        \item An ordered partition $((\Theta_l)_{l=1}^{L-2}, \Theta_{L-1}^{1}, \Theta_{L}, \Theta_{L-1}^{2})$ is a downward unfolding of $C^L$ if $\{\Theta_{L-1}^{1}, \Theta_{L-1}^{2}\}$ is a partition  of $\Theta_{L-1}$ such that 
        $$(\theta_{i}, \theta_{j}) \notin \mathbf{G}, \quad \text{ for all } \theta_{i} \in \Theta_{L-1}^2 \text{ and } \theta_{j} \in \Theta_{L-1}^{1} \cup \Theta_L.$$
        
        \item An $L_+$-semi-chain $C^{L_+}$, for some $L_+ > L$, is obtained from $C^L$ by successive upward (resp. downward) unfolding if there exists a sequence 
        $C^L, C^{L+1}, \ldots, C^{L_+}$
        such that each $C^{L'}$ is an upward (resp. downward) unfolding of $C^{L'-1}$, for all $L' \in \{L+1, \ldots, L_+\}$.
    \end{enumerate}
\end{definition}

\begin{theorem}\label{thm:2-semi-chain}
    An ordered partition $(\Theta_l)_{l=1}^L$ for some $L \geq 3$ is a (strongly connected) $L$-semi-chain if and only if it can be obtained from a (strongly connected) $2$-semi-chain by successive upward unfolding (and symmetrically, by successive downward unfolding).
\end{theorem}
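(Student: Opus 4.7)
The plan is to prove both directions of the biconditional (the downward version will follow from the upward one by a mirror argument) while carrying strong connectivity as a parallel invariant. For sufficiency, by induction on the number of unfoldings, I reduce to checking that a single upward unfolding of an $L$-semi-chain yields an $(L+1)$-semi-chain with strong connectivity preserved. I would verify the semi-chain condition on the unfolded partition $(\Theta_2^1,\Theta_1,\Theta_2^2,\Theta_3,\ldots,\Theta_L)$ level by level: nodes in $\Theta_1$ and in $\Theta_l$ with $l\geq 4$ inherit adjacency from the original semi-chain, while the three boundary cases are nodes in $\Theta_2^1$ (new level $1$), in $\Theta_2^2$ (new level $3$), and in $\Theta_3$ (new level $4$); in each case the edges that must be excluded are precisely those forbidden by the unfolding's non-edge clause. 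For strong connectivity, I would observe that edges within $\Theta_2^1$ or within $\Theta_2^2$ remain within-level, that $\Theta_2^1$--$\Theta_2^2$ edges do not exist by the non-edge clause, and that every old between-level edge stays between-level; the between-level subgraph is therefore unchanged.

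For necessity, I induct on $L$ using a canonical \emph{collapse} that inverts one upward unfolding: given an $L$-semi-chain $(\Theta_l)_{l=1}^L$ with $L\geq 3$, form
\[
(\Theta_2,\ \Theta_1\cup\Theta_3,\ \Theta_4,\ \Theta_5,\ldots,\Theta_L).
\]
The driving observation is that $\Theta_0=\emptyset$ forces $N(\Theta_1)\subseteq\Theta_1\cup\Theta_2$, so $\Theta_1$ has no edges to any $\Theta_l$ with $l\geq 3$. Three things then follow from this single fact: (i) the collapse is an $(L-1)$-semi-chain, since the non-trivial neighborhood checks for nodes in $\Theta_1$, $\Theta_3$, and $\Theta_4$ all reduce either to the original semi-chain condition or to the highlighted fact; (ii) with the identification $\Theta_2^1:=\Theta_1$ and $\Theta_2^2:=\Theta_3$, the non-edge clause of upward unfolding is exactly the highlighted fact, so the original $L$-semi-chain is the upward unfolding of its collapse; (iii) the ``would-be'' $\Theta_1$--$\Theta_3$ within-level edges in the merged level do not exist, so between-level edges coincide before and after the collapse and strong connectivity transfers. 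Iterating the collapse $L-2$ times lands at a $2$-semi-chain, and reversing the sequence yields the required chain of upward unfoldings.

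The main obstacle is bookkeeping: after each unfolding or collapse one must renumber levels and recheck the semi-chain condition at the reorganized boundary, and it is easy to miss a case. The non-edge clause in the unfolding definition is calibrated precisely so that every such check reduces to a one-line verification, so the argument is clean once the indexing is tracked carefully. The downward statement then follows from the upward one by a mirror argument: reversing the order of levels carries any $L$-semi-chain to an $L$-semi-chain and interchanges the roles of upward and downward unfolding at the two ends of the partition.
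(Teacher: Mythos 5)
Your proposal is correct and follows essentially the same route as the paper: your ``collapse'' $(\Theta_2,\ \Theta_1\cup\Theta_3,\ \Theta_4,\ldots,\Theta_L)$ is exactly the paper's downward folding, your observation that $N(\Theta_1)\subseteq\Theta_1\cup\Theta_2$ makes folding and unfolding mutually inverse is the paper's Lemma~\ref{lemma:split_fold}, and your remark that the between-level edge set is unchanged under collapse is the paper's Lemma~\ref{lemma:succession_connected}. The only cosmetic difference is that you verify preservation of the semi-chain property and strong connectivity directly in both directions, whereas the paper routes the forward direction through the same folding lemmas.
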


\begin{figure}
    \centering
    \includegraphics[width=0.9\linewidth]{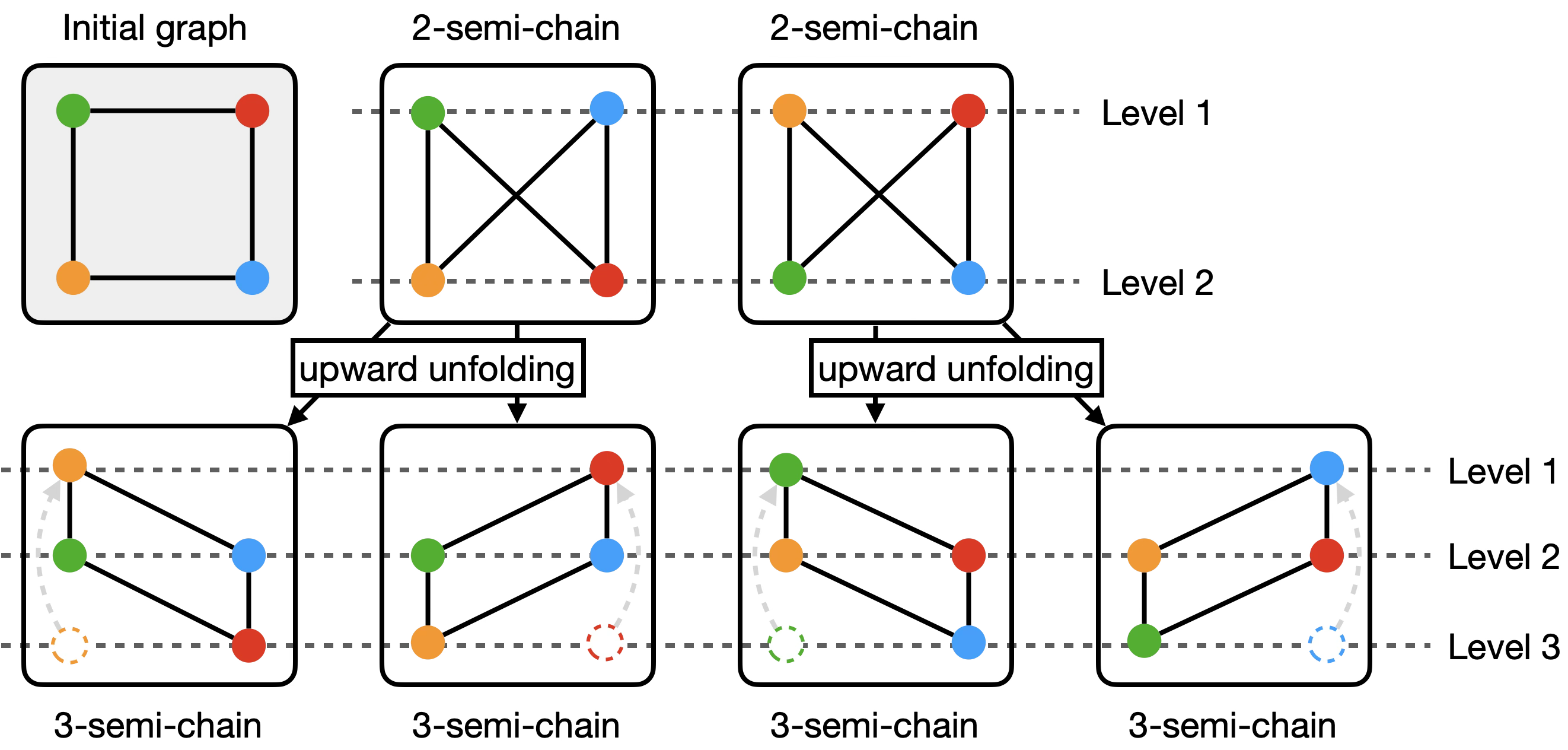}
    \caption{Illustration of Upward Unfolding}
    \label{fig:upward_folding}
\end{figure}

In Figure~\ref{fig:upward_folding}, we give an example to show how to generate all semi-chain trough upward unfolding of 2-semi-chain. Consequently, all strongly connected semi-chains can be obtained by enumerating the spanning trees and applying the possible unfolding operations. However, directly generating 2-semi-chains through spanning trees may lead to duplication. As illustrated in Figure~\ref{fig:spanning_tree}, different spanning trees can produce the same pair of 2-semi-chains. In the next section, we present more efficient methods to characterize 2-semi-chains within differential privacy.

\section{Differential Privacy}

In this section, we focus on differential privacy (Definition \ref{def:differential}). When $\Theta = \prod_{k=1}^K \Theta^{(k)}$ with $|\Theta^{(k)}| \geq 2$, we refer to it as the $K$-dimensional case. We first show that for any $K$-dimensional differential privacy setting, regardless of the number of possible values in each dimension, the number of levels in a semi-chain depends only on the dimensionality $K$.

\begin{proposition}\label{prop:constrain_level_dp}
     For the differential graph $\mathbf{G}_\mathrm{diff}$ in the $K$-dimensional case, a (strongly connected) $L$-semi-chain exists if and only if $L \leq K+1$.
\end{proposition}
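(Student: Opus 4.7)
The key geometric fact underlying both directions is that the differential graph $\mathbf{G}_{\mathrm{diff}}$ is the Cartesian product of cliques $K_{|\Theta^{(k)}|}$, so its graph-theoretic distance coincides with the Hamming distance (the number of coordinates on which two points differ) and is bounded above by $K$. For the ``only if'' direction, fix any $L$-semi-chain $(\Theta_l)_{l=1}^L$ and pick $\theta_1 \in \Theta_1$, $\theta_L \in \Theta_L$ (both non-empty by definition of a semi-chain). The semi-chain property forces the level index to change by at most one along each edge of $\mathbf{G}_{\mathrm{diff}}$, so every path from $\theta_1$ to $\theta_L$ has length at least $L-1$; since the graph distance is at most $K$, this gives $L \leq K+1$.

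For the ``if'' direction, I construct a strongly connected $L$-semi-chain explicitly for every $L \leq K+1$. Partition $\{1,\ldots,K\}$ into $L-1$ non-empty groups $I_1,\ldots,I_{L-1}$ (possible since $L-1 \leq K$), fix a reference point $\theta_* \in \Theta$, and for each $\theta \in \Theta$ define
\[
p_g(\theta) := |\{k \in I_g : \theta^{(k)} \neq \theta_*^{(k)}\}| \bmod 2, \qquad l(\theta) := 1 + \sum_{g=1}^{L-1} p_g(\theta) \in \{1,\ldots,L\},
\]
and set $\Theta_l := \{\theta : l(\theta) = l\}$. Each level is non-empty, since for any prescribed parity pattern $(p_1,\ldots,p_{L-1}) \in \{0,1\}^{L-1}$ one can flip a single coordinate in each group $I_g$ with $p_g = 1$ and leave the remaining coordinates matching $\theta_*$. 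For any two neighbors differing only in coordinate $\hat k \in I_g$, only $p_g$ can change, and it changes by at most one, so $|l(\theta) - l(\theta')| \le 1$, which yields the semi-chain property.

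The central step, and the main obstacle, is verifying strong connectivity. The naive truncation $l(\theta) = \min(L, 1 + |\{k : \theta^{(k)} \neq \theta_*^{(k)}\}|)$ of Hamming distance is a valid semi-chain but fails strong connectivity because any node at Hamming distance strictly greater than $L-1$ has only same-level neighbors. The parity-based grouping fixes this: a between-level edge is one on which some $p_g$ flips, which by a direct parity check happens precisely when exactly one of $\theta^{(\hat k)}, \theta'^{(\hat k)}$ equals $\theta_*^{(\hat k)}$. Hence from any $\theta$ one can reach $\theta_*$ by sequentially overwriting each non-matching coordinate of $\theta$ with $\theta_*^{(\hat k)}$, a walk consisting entirely of between-level edges; this proves the between-level subgraph is connected, completing the construction.
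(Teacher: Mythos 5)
Your proof is correct, and while the ``only if'' direction coincides with the paper's (level index changes by at most one per edge, so the distance from $\Theta_1$ to $\Theta_L$ is at least $L-1$, yet the diameter of a product of cliques is at most $K$), your ``if'' direction takes a genuinely different route. The paper builds only the maximal $(K+1)$-semi-chain, by induction on $K$: it duplicates a strongly connected $K$-semi-chain of the $(K-1)$-dimensional graph $|\Theta^{(K)}|$ times and shifts one copy up a level (Lemma~\ref{lemma:K+1_dp}), then obtains every $L \leq K$ by successive downward folding, invoking the fact that folding preserves strong connectivity (Lemma~\ref{lemma:succession_connected}). You instead give a single closed-form level function for each target $L$: partition the coordinates into $L-1$ groups, and set $l(\theta) = 1 + \sum_g p_g(\theta)$ where $p_g$ is the parity of the number of coordinates in group $g$ disagreeing with a reference point $\theta_*$. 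Your verification is sound: levels are non-empty because every parity pattern is realizable, the semi-chain property holds because an edge touches only one group and flips its parity by at most one, and strong connectivity follows from your (correct) characterization of between-level edges as exactly those where one endpoint matches $\theta_*$ in the differing coordinate, so that the coordinate-by-coordinate walk to $\theta_*$ uses only between-level edges. Your approach is self-contained and uniform in $L$, avoiding both the dimension induction and the folding machinery; the paper's approach is less explicit but reuses lemmas already needed for Theorem~\ref{thm:2-semi-chain} and makes the recursive product structure of $\mathbf{G}_{\mathrm{diff}}$ visible, which the paper exploits again in the binary case. Your aside about the naive Hamming-distance truncation failing strong connectivity is also accurate and well-motivated, though not needed for the proof.
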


Proposition~\ref{prop:constrain_level_dp} consists of two parts. First, the number of levels in a semi-chain cannot exceed $K+1$. This is because, in an $L$-semi-chain, the shortest path connecting a node in the first level to a node in the last level has length at least $L-1$, while in a $K$-dimensional differential graph, the shortest path between any two nodes has length at most $K$. Second, a $(K+1)$-semi-chain does exist. A $K$-dimensional differential graph can be viewed as a combination of $|\Theta^{(k)}|$ copies of $(K-1)$-dimensional differential graphs. Suppose $\hat{C}^{K}$ is the $K$-semi-chain in the $(K-1)$-dimensional case. By duplicating $\hat{C}^{K}$ $|\Theta^{(k)}|$ times and shifting one copy upward by one level, we obtain a $(K+1)$-semi-chain $\hat{C}^{(K+1)}$ for the $K$-dimensional graph. Figure~\ref{fig:dp_construction_m+1} illustrates this construction. Moreover, if $\hat{C}^{K}$ is strongly connected, then $\hat{C}^{(K+1)}$ remains strongly connected.

\begin{figure}
    \centering
    \includegraphics[width=\linewidth]{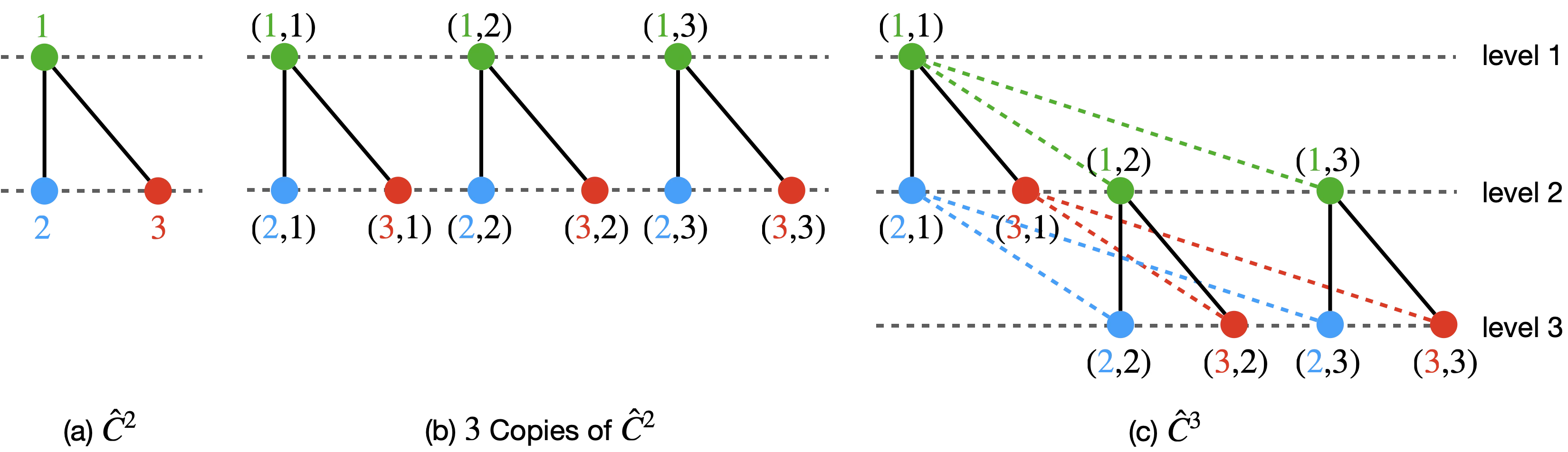}
    \caption{Illustration of Construction of $\hat{C}^{3}$ from $\hat{C}^{2}$}
    \label{fig:dp_construction_m+1}

    \vspace{2mm}
    {\small \textit{Note:} In each panel, only the between-level edges are displayed. Panel (a) presents a 2-semi-chain $\hat{C}^2$ in the two-dimensional case where $\Theta = \{1,2,3\}$. To construct a 3-semi-chain in the three-dimensional case where $\Theta = \{1,2,3\}^2$, Panel (b) duplicates $\hat{C}^2$ three times, and Panel (c) shifts the first copy upward by one level. Since each $\hat{C}^2$ is connected, and after the shift, the nodes with the same value in the last dimension are linked with one another (shown as dashed lines in Panel (c)), the resulting $\hat{C}^3$ is strongly connected. }
\end{figure}

Proposition~\ref{prop:constrain_level_dp} is particularly useful when the dataset has low dimensionality, since in that case we do not need to consider overly complex semi-chains. In the two-dimensional case, regardless of how many possible values each dimension takes, only 2-semi-chains exist. Therefore, to generate all extreme posteriors, the unfolding operation is unnecessary. However, when the dimensionality is high, even if each dimension contains only two possible values, more complex semi-chains may arise. In the next subsection, we will show that in this case, there exists only one pair of strongly connected 2-semi-chains.

\subsection{Binary Case}

In this subsection, we focus on the binary case, where $\Theta = \{0,1\}^K$. For convenience, we denote $\mathbf{G}_{\mathrm{diff}}^{(K,2)}$ as the differential graph over $\{0,1\}^K$. 

In this setting, we can partition $\Theta$ according to the value of the last entry:
$$\Theta_{\theta^{(K)} = 0}:=\{\theta \in \Theta : \theta^{(K)} = 0\}, \quad
\Theta_{\theta^{(K)} = 1} := \{\theta \in \Theta : \theta^{(K)} = 1\}.$$
Then, we can define $\mathbf{G}_{\mathrm{diff},a}^{(K,2)}$ as the subgraph of $\mathbf{G}_{\mathrm{diff}}^{(K,2)}$ restricted to $\Theta_{\theta^{(K)} = a}$, for all $a \in \{0,1\}$. Note that $\mathbf{G}_{\mathrm{diff},a}^{(K,2)}$ has the same structure as $\mathbf{G}_{\mathrm{diff}}^{(K-1,2)}$, for all $a \in \{1,2\}$. Moreover, in $\mathbf{G}_{\mathrm{diff}}^{(K,2)}$, aside from the edges in $\mathbf{G}_{\mathrm{diff},0}^{(K,2)} \cup \mathbf{G}_{\mathrm{diff},1}^{(K,2)}$, the remaining edges are between $\mathbf{G}_{\mathrm{diff},0}^{(K,2)}$ and $\mathbf{G}_{\mathrm{diff},1}^{(K,2)}$, connecting $(\theta^{(-K)},0)$ and $(\theta^{(-K)},1)$.
Hence, $\mathbf{G}_{\mathrm{diff}}^{(K,2)}$ can be constructed recursively by combining two copies of $\mathbf{G}_{\mathrm{diff}}^{(K-1,2)}$ and adding edges between nodes that differ only in their last coordinate. Figure~\ref{fig:strcutrue_dp_binary} illustrates this recursive structure.

\begin{figure}[t]
    \centering
    \includegraphics[width=\textwidth]{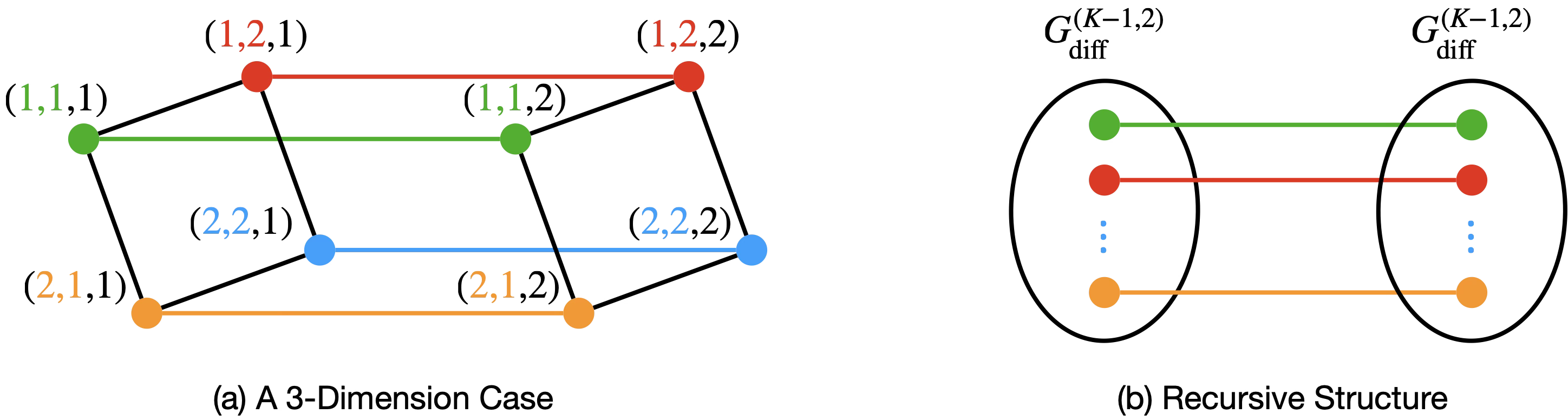}
    \caption{Recursive Structure of the Differential Graph in the Binary Case}
    \label{fig:strcutrue_dp_binary}
    
\end{figure}

Let $C = (\Theta_l)_{l=1}^L$ be an $L$-semi-chain.  The reverse semi-chain of $C$ is the ordered partition $C^{\mathrm{rev}} := (\Theta_{L-l+1})_{l=1}^L$, obtained by inverting the order of levels. Two semi-chains $C$ and $C'$ are said to be reverses of each other if $C' = C^{\mathrm{rev}}$.

\begin{proposition}\label{prp:dp_binary_unique}
    For any $K$-dimension differential graph in binary case, $\mathbf{G}_{\mathrm{diff}}^{(K,2)}$, there exists a unique strongly connected 2-semi-chain up to reversal.
\end{proposition}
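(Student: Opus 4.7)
The plan is to identify the parity partition of the hypercube as the unique strongly connected 2-semi-chain up to reversal, using a $\mathbb{Z}/2\mathbb{Z}$-valued conservation law along between-level edges. The key observation is that every edge of $\mathbf{G}_{\mathrm{diff}}^{(K,2)}$ flips exactly one coordinate and therefore changes the coordinate-sum parity $p(\theta) := \sum_{k=1}^K \theta^{(k)} \bmod 2$ by $1$.

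For existence, I would define $\Theta_1^\star := \{\theta \in \Theta : p(\theta) = 0\}$ and $\Theta_2^\star := \{\theta \in \Theta : p(\theta) = 1\}$. Because each edge changes $p$, every edge of $\mathbf{G}_{\mathrm{diff}}^{(K,2)}$ is between-level under this partition. Deleting within-level edges therefore removes nothing and leaves the connected graph $\mathbf{G}_{\mathrm{diff}}^{(K,2)}$ intact, so $(\Theta_1^\star, \Theta_2^\star)$ is a strongly connected 2-semi-chain.

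For uniqueness, let $(\Theta_1, \Theta_2)$ be any strongly connected 2-semi-chain and set $\phi(\theta) := 0$ on $\Theta_1$ and $\phi(\theta) := 1$ on $\Theta_2$. Along any between-level edge $(u,v)$, both $\phi$ and $p$ flip modulo $2$, so $\phi(u) - p(u) \equiv \phi(v) - p(v) \bmod 2$. Strong connectedness means the subgraph of between-level edges is connected on all of $\Theta$, so $\phi - p$ is globally constant modulo $2$. Hence either $\phi \equiv p$, yielding $(\Theta_1^\star, \Theta_2^\star)$, or $\phi \equiv p + 1 \bmod 2$, yielding its reversal, which is exactly uniqueness up to reversal.

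The argument is short and I do not foresee a serious obstacle. The only point requiring care is the correct use of strong connectedness: the subgraph of between-level edges must be connected on the full vertex set $\Theta$, which is precisely the content of the definition and precisely what propagates the local invariance $\phi - p \equiv \text{const}$ from individual edges to the entire domain. Without the spanning property, one could only conclude that $\phi - p$ is constant on each connected component, which would not suffice.
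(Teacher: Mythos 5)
Your proof is correct, and it takes a genuinely different route from the paper. The paper proceeds by induction on $K$: a transitivity lemma shows that if $\mathbf{G}_{\mathrm{diff}}^{(K-1,2)}$ admits a strongly connected $2$-semi-chain with no within-level edges, so does $\mathbf{G}_{\mathrm{diff}}^{(K,2)}$ (via the recursive splicing $((A,0)\cup(B,1),\,(B,0)\cup(A,1))$), and a separate general lemma shows that any graph possessing such a within-level-edge-free strongly connected $2$-semi-chain has no other strongly connected $2$-semi-chain up to reversal. You instead recognize the distinguished semi-chain explicitly as the parity bipartition of the hypercube, which gives existence in one line (every edge flips one binary coordinate, hence the parity, so no edge is within-level and deleting within-level edges deletes nothing), and you replace the paper's uniqueness lemma by a $\mathbb{Z}/2\mathbb{Z}$ potential argument: $\phi - p$ is invariant along between-level edges, and strong connectedness propagates this invariance over all of $\Theta$. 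The two uniqueness arguments are morally the same fact — the paper's intersection argument with $A_1, A_2, B_1, B_2$ is the combinatorial shadow of your conservation law — but your packaging is shorter and avoids the induction entirely, since the parity function is available in closed form. What the paper's modular route buys in exchange is a reusable general lemma (uniqueness for any graph admitting a bipartite strongly connected $2$-semi-chain) that is stated and proved independently of the hypercube structure.
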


When $K=2$, the differential graph $\mathbf{G}_{\mathrm{diff}}^{(2,2)}$ forms a cycle of four nodes. This case is illustrated in Figure~\ref{fig:upward_folding}, where we can easily observe that there exists a unique $2$-semi-chain up to reversal. The key feature of this pair of $2$-semi-chains is that all edges are between levels. Lemma~\ref{lemma:unique_binary} shows that this property guarantees the uniqueness of the $2$-semi-chain up to reversal. Moreover, due to the recursive structure of $\mathbf{G}_{\mathrm{diff}}^{(K,2)}$, this property is inherited from the $K$-dimensional case to the $(K+1)$-dimensional case.

\begin{lemma}\label{lemma:unique_binary}
    For any graph on $\Theta$, suppose there exists a strongly connected $2$-semi-chain with no within-level edges. Then this $2$-semi-chain is unique up to reversal.
\end{lemma}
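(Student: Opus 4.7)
My plan is to reduce the claim to the classical fact that a connected bipartite graph admits a unique bipartition up to swapping the two sides. The hypothesis says there exists a $2$-semi-chain $(\Theta_1,\Theta_2)$ with no within-level edges that is strongly connected. Since there are no within-level edges to delete, ``strongly connected'' here reduces to the statement that $\mathbf{G}$ itself is connected. Moreover, the absence of within-level edges means every edge of $\mathbf{G}$ has exactly one endpoint in $\Theta_1$ and one in $\Theta_2$, i.e., the partition is a proper $2$-coloring of $\mathbf{G}$.

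Given any other $2$-semi-chain $(\Theta_1',\Theta_2')$ with no within-level edges, I would fix a reference vertex $\theta^* \in \Theta$ and split into two cases according to whether $\theta^*$ sits on the same side or opposite sides of the two partitions. By relabeling, we may assume $\theta^* \in \Theta_1 \cap \Theta_1'$, and the goal becomes to prove $\Theta_1 = \Theta_1'$ and $\Theta_2 = \Theta_2'$; the other case yields the reversal by the same argument after swapping $\Theta_1'$ and $\Theta_2'$.

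The core step is an induction on the graph distance $d(\theta^*,\theta)$ in $\mathbf{G}$, which is well-defined for every $\theta \in \Theta$ because $\mathbf{G}$ is connected. The inductive claim is: $\theta \in \Theta_1$ iff $d(\theta^*,\theta)$ is even, and the same equivalence holds with $\Theta_1'$ in place of $\Theta_1$. Both equivalences follow from the fact that, since every edge is between levels of the respective $2$-semi-chain, traversing any edge in a shortest path flips the side on which a vertex lies. Consequently the membership of each vertex in $\Theta_1$ vs.\ $\Theta_2$ is determined solely by the parity of its distance from $\theta^*$, and likewise for $\Theta_1'$ vs.\ $\Theta_2'$; comparing the two characterizations yields $\Theta_1 = \Theta_1'$ and $\Theta_2 = \Theta_2'$.

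I do not anticipate a substantive obstacle: the only delicate points are (i) verifying that ``strongly connected with no within-level edges'' really does force $\mathbf{G}$ to be connected as an undirected graph, so that the distance function is everywhere finite, and (ii) articulating the parity argument cleanly enough that it applies simultaneously to both partitions. Neither requires more than a short line of justification using the definitions already in the paper.
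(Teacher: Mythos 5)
There is a genuine gap: you prove a weaker statement than the lemma requires. You quantify over ``any other $2$-semi-chain $(\Theta_1',\Theta_2')$ \emph{with no within-level edges}'' and your parity argument hinges on the premise that traversing \emph{any} edge of $\mathbf{G}$ flips the side of the competitor's partition. But the lemma, as it is used in Proposition~\ref{prp:dp_binary_unique}, must establish uniqueness among \emph{all} strongly connected $2$-semi-chains, and a competing strongly connected $2$-semi-chain $(C,D)$ is perfectly allowed to have within-level edges --- strong connectivity only demands that the graph stay connected after those edges are deleted. For such a competitor, an edge inside $C$ (or inside $D$) does not flip the side, so the claim ``membership is determined by the parity of the distance from $\theta^*$'' breaks down for $(C,D)$, and your case analysis never rules these competitors out. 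What you have actually shown is the classical fact that a connected bipartite graph has a unique bipartition, which only yields uniqueness within the class of within-level-edge-free $2$-semi-chains.

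The repair is short but it is a genuinely missing step: given a strongly connected competitor $(C,D)$, pass to the spanning subgraph $H$ consisting of the between-level edges of $(C,D)$; strong connectivity says $H$ is connected, every edge of $H$ crosses $(C,D)$ by construction, and every edge of $H$ also crosses $(A,B)$ because $(A,B)$ has no within-level edges at all. Then your parity argument applied on $H$ gives $(C,D)\in\{(A,B),(B,A)\}$. For comparison, the paper argues by an explicit cut rather than by parity: writing $A_1=C\cap A$, $B_1=C\cap B$, $A_2=D\cap A$, $B_2=D\cap B$, it observes that every $\mathbf{G}$-edge between the two blocks $A_1\cup B_2$ and $A_2\cup B_1$ must run from $A_1$ to $B_1$ or from $B_2$ to $A_2$ (since $A$ and $B$ contain no edges), and all such edges are within-level for $(C,D)$; deleting them disconnects the graph whenever both blocks are nonempty, i.e.\ whenever $(C,D)$ is neither $(A,B)$ nor its reversal. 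Both routes are elementary, but the paper's cut argument handles arbitrary competitors in one stroke, which is exactly the point your write-up misses.
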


\begin{proof}
    Let $(A,B)$ be the strongly connected $2$-semi-chain with no within-level edges, shown in Figure~\ref{fig:proof_dp_unique}(a). Take any other $2$-semi-chain $(C, D)$, shown in Figure~\ref{fig:proof_dp_unique}(b). Denote $$A_1 := C \cap A, \quad  B_1:=C\cap B, \quad A_2 := D \cap A, \quad B_2 := D \cap B.$$
    Since $(C,D)$ is different from $(A,B)$, $A_1 \cup B_2\neq \emptyset$, $A_2 \cup B_1 \neq \emptyset$.

    Since $(A,B)$ has no within-level edges, there are no edges between $A_1$ and $A_2$, nor between $B_1$ and $B_2$. Therefore, in the original graph, the edges from $A_1 \cup B_2$ to $A_2 \cup B_1$ are either be from $A_1$ to $B_1$ or from $B_2$ to $A_2$. But these edges are within-level with respect to $(C,D)$. Hence, $(C,D)$ cannot be strongly connected. Thus, the only strongly connected 2-semi-chains are $(A,B)$ and its reverse $(B,A)$, i.e. the 2-semi-chain is unique up to reversal.
\end{proof}

\begin{figure}[t]
    \centering
    \includegraphics[width=0.8\linewidth]{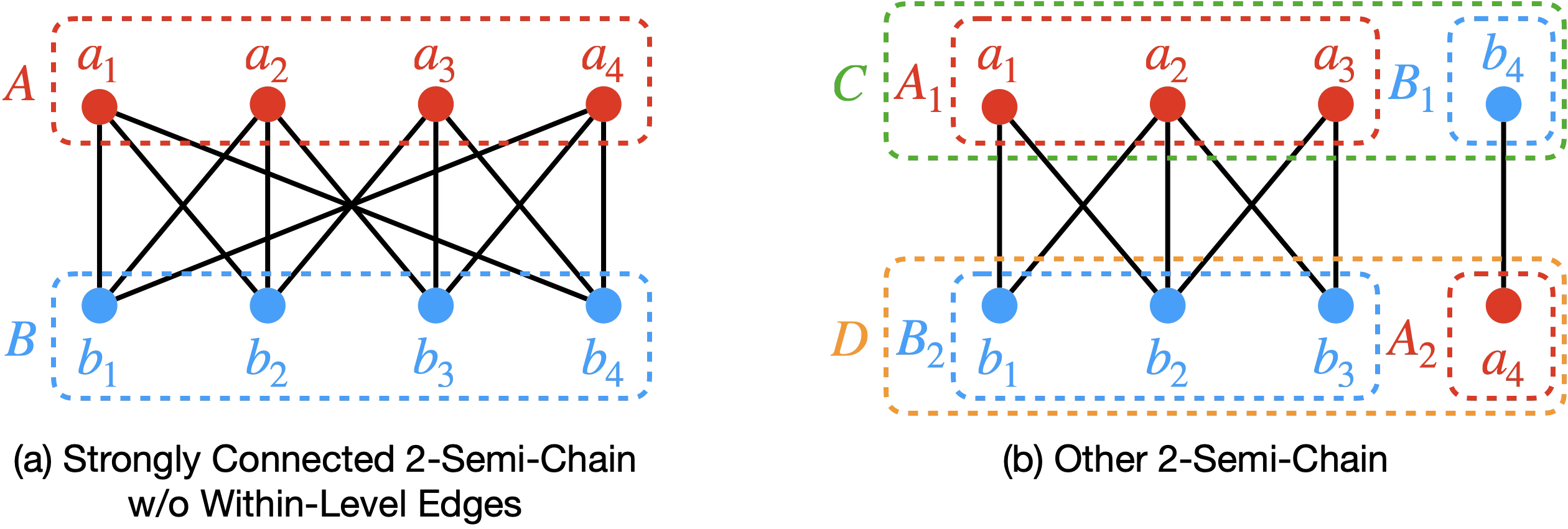}
    \caption{Illustration of Uniqueness in Lemma~\ref{lemma:unique_binary}}
    \label{fig:proof_dp_unique}
\end{figure}

The construction of this pair of $2$-semi-chains is straightforward. Starting from any node assigned to one level, all its neighboring nodes are placed in the opposite level, and this process continues recursively. Proposition~\ref{prp:dp_binary_unique} then implies that, in the binary differential privacy case, all extreme posteriors can be generated solely through unfolding.

\subsection{Two-Dimension Case}

In the two-dimensional case, only 2-semi-chains exist, and they can be generated from spanning trees of the graph. However, different spanning trees may yield the same 2-semi-chain. In this subsection, we further analyze the structural properties of 2-semi-chains in the two-dimensional case and propose a more efficient algorithm to generate all strongly connected 2-semi-chains without duplication.

Consider a two-dimensional dataset $\Theta = X \times Y$, where $X = \{x_1, \ldots, x_{n_1}\}$ and $Y = \{y_1, \ldots, y_{n_2}\}$ for some $n_1, n_2 \ge 2$. For each $\hat{x} \in X$ and $\hat{y} \in Y$, define the equivalence classes For each $\hat{x} \in X$ and $\hat{y} \in Y$, define $$[\hat{x}]: = \{(\hat{x}, y) : y \in Y\}, \quad [\hat{y}] := \{(x,\hat{y}) : x\in X\},$$
as the equivalent classes. By the definition of differential privacy, for any pair $(\theta_i, \theta_j) \in \Theta^2$, $(\theta_i,\theta_j) \in E$ if and only if there exists $\hat{x} \in X$ or $\hat{y} \in Y$ such that $\theta_i,\theta_j \in [\hat{x}]$ or $[\hat{y}]$.

Let $\mathbf{G}_{\mathrm{diff}}^{2}$ be the differential graph in two dimensions. If we partition the nodes according to $[x]$-equivalent classes, then:  
(i) for each $\hat{x} \in X$, the subgraph restricted to $[\hat{x}]$ is a complete graph;  
(ii) for each $\hat{y} \in Y$, the class $[\hat{y}]$ intersects every $[x]$-equivalent class at exactly one point and these points are connected with each other. Figure~\ref{fig:dp_2dim} provides illustrative examples.

\begin{figure}
    \centering
    \includegraphics[width = 0.8\textwidth]{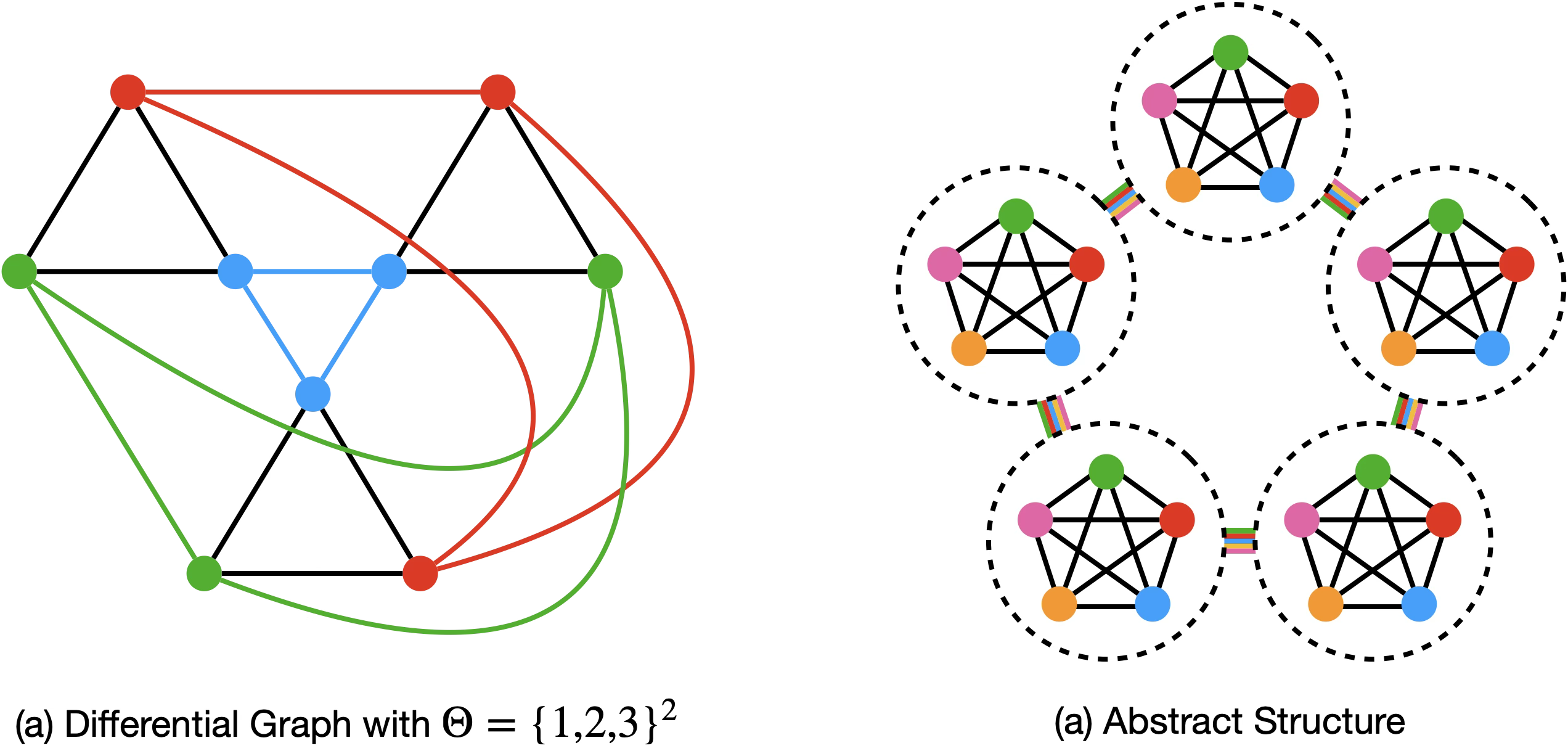}
    \caption{Structure of the  Differential Graph in the Two Dimension}
    \label{fig:dp_2dim}
\end{figure}

Given a 2-semi-chain of $\mathbf{G}_{\mathrm{diff}}^2$, it can be partitioned into $n_1$ components according to the $[x]$-equivalence classes. Each $[x]$-equivalence class corresponds to a \textit{division pattern} of $Y$, which divides the points in that class into two levels based on their $y$-coordinate values.

Let $\mathcal{D}$ denote the collection of division patterns of $Y$, which includes all ordered partitions of $Y$, as well as $(\emptyset, Y)$ and $(Y, \emptyset)$, representing the cases where all points in $Y$ are assigned to a single level.
For an equivalence class $[\hat{x}]$, we say that $[\hat{x}]$ in a 2-semi-chain follows the division pattern $\hat{d} = (Y_1, Y_2) \in \mathcal{D}$ if
$$[\hat{x}]_l = \{ (\hat{x}, y) : y \in Y_l \}, \quad \text{for all } l \in \{1,2\},$$
where $[\hat{x}]_l$ denotes the subset of $[\hat{x}]$ assigned to $l$-level in the 2-semi-chain.

Hence, to construct a 2-semi-chain: 
(i) select $n_1$ division patterns, and
(ii) assign these division patterns to $[x]$-equivalence classes. For a division pattern $d = (Y_1, Y_2)$, if $Y_1 = \emptyset$ or $Y_2 = \emptyset$, we say $d$ is undivided; otherwise, $d$ is divided. To ensure that the 2-semi-chain is strongly connected, there exists at least one divided division pattern. Then, we can distinguish the following three categories of strongly connected 2-semi-chains (Figure~\ref{fig:3_categories}): (Category I) no $[x]$-equivalent class is undivided; (Category II) exactly one level contains an undivided $[x]$-equivalent class; (Category III) both levels contain undivided $[x]$-equivalent classes.

\begin{figure}
    \centering
    \includegraphics[width=\textwidth]{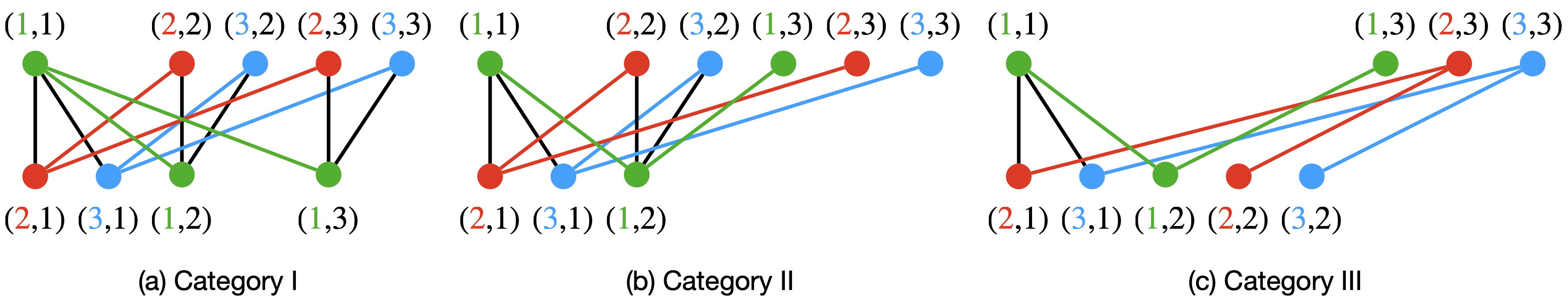}
    \caption{Illustration of the Three Categories of 2-Semi-Chains when $\Theta = \{1,2,3\}^2$}
    \label{fig:3_categories}
\end{figure}

\emph{Category I.}  Since the subgraph on each $[x]$-equivalent class is complete, when an $[x]$-equivalent class is divided, all inner nodes are connected with each other through between-level edges. In category I, as long as all equivalent classes are not partitioned in the same way, the 2-semi-chain are strongly connected: If there are two equivalent classes $[\hat{x}_1]$ and $[\hat{x}_2]$ that follow different patterns, then from $[\hat{x}_1]$ to $[\hat{x}_2]$ there exists at least one pair of nodes with the same $y$ connected by between-level edges. For the remaining $[x]$-equivalent classes, the division pattern must differ from both $[\hat{x}_1]$ and $[\hat{x}_2]$. Hence, the resulting 2-semi-chain is strongly connected.

\emph{Category II.} Since only one level contains an undivided $[x]$-equivalent class, the nodes in this class are connected only via within-level edges, as illustrated in Figure~\ref{fig:3_categories}(b). Therefore, to ensure that these nodes are connected by between-level edges, the opposite level must contain, for each $y \in Y$, at least one node in $[y]$.

\emph{Category III.} There is one divided equivalent class $[\hat{x}_1]$, and the first and second levels contain undivided equivalent classes $[\hat{x}_2]$ and $[\hat{x}_3]$, respectively. As illustrated in Figure~\ref{fig:3_categories}(c), each node in $[\hat{x}_1] \cup [\hat{x}_2] \cup [\hat{x}_3]$ is connected to the others through between-level edges. More importantly, for all $y \in Y$, each level contains at least one node in $[y]$. Hence, for the remaining $[x]$-equivalent classes, any division pattern can be chosen to maintain a strongly connected 2-semi-chain.

For a sequence of division patterns, $d = (d_{t_1}, \ldots, d_{t_{n_1}})$, if it can generate a strongly connected 2-semi-chain, we say $d$ is strongly connected division sequence. To construct an algorithm to generate all strongly connected division sequence, at first, we need code a number of different division pattern, $$\mathcal{D} = \{d_1, \ldots, d_{2^{n_2} -2}, d_{2^{n_2} -1}, d_{2^{n_2}}\},$$ where $d_{2^{n_2} - 1} = (Y, \emptyset)$, $d_{2^{n_2}} = (\emptyset, Y)$ and others corresponds to different ordered partition of $Y$. We say that $d = (d_{t_1}, \ldots, d_{t_{n_1}})$ is increasing if $t_1 \leq t_2 \leq \ldots \leq t_{n_1}$. We need only to generate all increasing, strongly connected division sequences, then applying these division patterns to $[x]$-equivalent classes according to arbitrary permutation, we can generate all strongly connected 2-semi-chains.

\begin{proposition}\label{prp:dp_2dim}
    Algorithm~\ref{algorithm:2_level_structure} generates all increasing, strongly connected sequences of division patterns for two-dimensional differential privacy, ensuring no duplication.
\end{proposition}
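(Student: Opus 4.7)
The plan is to verify three properties of Algorithm~\ref{algorithm:2_level_structure} in turn: \textbf{soundness} (every sequence produced is increasing and strongly connected), \textbf{completeness} (every increasing, strongly connected sequence is produced), and \textbf{no duplication} (no sequence is produced twice). Since the text already partitions strongly connected 2-semi-chains into the three categories (I: all $[x]$-classes divided; II: exactly one level has undivided classes; III: both levels have undivided classes), the natural strategy is to organize the proof around these cases and match each case to the corresponding branch of the algorithm.

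First I would fix notation and restate, as a lemma, the exact combinatorial characterization that strong connectedness imposes on an increasing sequence $d = (d_{t_1}, \ldots, d_{t_{n_1}}) \in \mathcal{D}^{n_1}$. Using the analysis given right before the proposition, I would show: in Category I, $d$ is strongly connected iff at least two entries of $d$ are distinct divided patterns (so the only forbidden case is $t_1 = \cdots = t_{n_1} \leq 2^{n_2}-2$); in Category II, with all undivided classes on the same side (say $t_{n_1} = 2^{n_2}-1$ and no $t_i = 2^{n_2}$, or symmetrically), $d$ is strongly connected iff the union of the opposite levels of the divided patterns covers every $y \in Y$; in Category III, the presence of one divided pattern together with undivided patterns on both sides automatically yields strong connectedness. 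Each direction follows from the graph-theoretic description already given: divided $[x]$-classes contribute between-level edges internally, and equal $y$-coordinates across classes give between-level edges whenever the two classes assign that $y$ to different levels.

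Next I would verify \textbf{completeness and soundness} branch by branch. Because an increasing sequence is uniquely specified by the multiset of indices $t_1 \leq \cdots \leq t_{n_1}$, in each category the algorithm need only enumerate multisets of size $n_1$ drawn from the appropriate subset of $\mathcal{D}$ (divided patterns only for I; divided patterns together with exactly one flavor of undivided for II; with both flavors present for III) and filter by the connectedness condition identified in the lemma. For Category I this reduces to enumerating non-constant multisets from $\{d_1, \ldots, d_{2^{n_2}-2}\}$; for Category II to enumerating multisets that contain at least one copy of $d_{2^{n_2}-1}$ (or of $d_{2^{n_2}}$, but not both) together with divided patterns whose opposite levels cover $Y$; for Category III to arbitrary multisets containing at least one copy of each of $d_{2^{n_2}-1}$, $d_{2^{n_2}}$ and at least one divided pattern. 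Matching these descriptions line-by-line to the algorithm's pseudocode establishes both soundness and completeness simultaneously.

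Finally, \textbf{no duplication} has two layers. Within a category, since the algorithm outputs multisets in increasing order and each increasing sequence corresponds to a unique multiset, no sequence is emitted twice. Across categories, the three cases are disjoint by construction: Category I outputs have neither $d_{2^{n_2}-1}$ nor $d_{2^{n_2}}$, Category II has exactly one of them, and Category III has both, so no sequence belongs to two categories. The main obstacle I expect is the Category~II coverage condition: checking that the algorithm's enumeration of divided patterns together with undivided $d_{2^{n_2}-1}$ (resp.\ $d_{2^{n_2}}$) indeed captures exactly the coverage requirement, and separating the two symmetric sub-cases cleanly so that a sequence containing only $d_{2^{n_2}-1}$ (and no $d_{2^{n_2}}$) is not also counted under the $d_{2^{n_2}}$-branch. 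Once that bookkeeping is pinned down the three parts assemble into the claimed statement.
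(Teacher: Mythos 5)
Your proposal is correct and follows essentially the same route as the paper: the paper also organizes the argument around the three categories, proves exactly the characterization lemmas you propose (Lemma~\ref{lemma:two_division} for Category I, Lemma~\ref{lemma:feasible_same_level} for Category II, Lemma~\ref{lemma:2_level_undivision} for Category III), and derives no-duplication from the nondecreasing enumeration of indices together with the disjointness of the three branches. Your explicit cross-category disjointness check (neither, exactly one, or both of $d_{2^{n_2}-1}, d_{2^{n_2}}$ appearing) is a slightly more careful bookkeeping of a point the paper leaves implicit, but the substance is identical.
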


\begin{algorithm}
\caption{Construction of Increasing, Strongly Connected Division Sequences}
\label{algorithm:2_level_structure}
\KwIn{Integers $n_1$, $n_2$; set $Y$; and the collection of division patterns 
$\mathcal{D}$.}
\KwOut{Set $\mathcal{L}$ of all increasing, strongly connected division sequences}

Initialize $t_0 \gets 1$, $\mathcal{L} \gets \emptyset$\;

\SetKwFunction{Enumerate}{Enumerate}
\SetKwProg{Fn}{Function}{:}{}

\Fn{\Enumerate{$z, t_{z-1}, (t_1, \ldots, t_{z-1})$}}{
    \For{$t_z \in [t_{z-1}, 2^{n_2}-2]$}{
        \emph{[Category I]}
        \If{$z = n_1$ \KwAnd $\bigl|\{t_{1}, \ldots, t_{n_1}\}\bigr| \ge 2$}{
            Add $(d_{t_1}, \ldots, d_{t_{n_1}})$ to $\mathcal{L}$\;
        }
        \emph{[Category II]}
        \For{$l \in \{1,2\}$}{
            \If{$Y \subseteq \bigcup_{z'=1}^{z} d_{t_{z'}}[l]$ (for $d=(Y_1,Y_2)$, $d[l]=Y_l$)}{
                Add $(d_{t_1}, \ldots, d_{t_z}, 
                \underbrace{d_{2^{n_2}+l-2}, \ldots, d_{2^{n_2}+l-2}}_{n_1 - z})$ 
                to $\mathcal{L}$ \;
            }
        }
        \emph{[Category III]}
        \If{$z \leq n_1-2$}{
            \For{$n' \in [1, n_1 - z - 1]$}{
                Add $(d_{t_1}, \ldots, d_{t_z}, 
                \underbrace{d_{2^{n_2}-1}, \ldots, d_{2^{n_2}-1}}_{n'}, 
                \underbrace{d_{2^{n_2}}, \ldots, d_{2^{n_2}}}_{n_1 - n' - z})$ 
                to $\mathcal{L}$\;
            }
        }
        \If{$z < n_1$}{
            \Enumerate{$z+1, t_z, (t_1, \ldots, t_z)$}\;
        }
    }
}

\BlankLine
\Enumerate{$1, t_0, ()$}\;
\end{algorithm}

The logic of Algorithm~\ref{algorithm:2_level_structure} is as follows. Begin by choosing $d_{t_1}$ from the set of division patterns, then append $d_{t_2}$ with $t_2 \geq t_1$, and continue inductively with nondecreasing indices. Each time a new pattern is added, check whether the remaining positions can be filled by undivided patterns to produce a valid sequence in Category II or Category III; if so, output the corresponding sequence.
Figure~\ref{fig:placeholder} gives an example of running the algorithm.

\begin{figure}
    \centering
    \includegraphics[width=\linewidth]{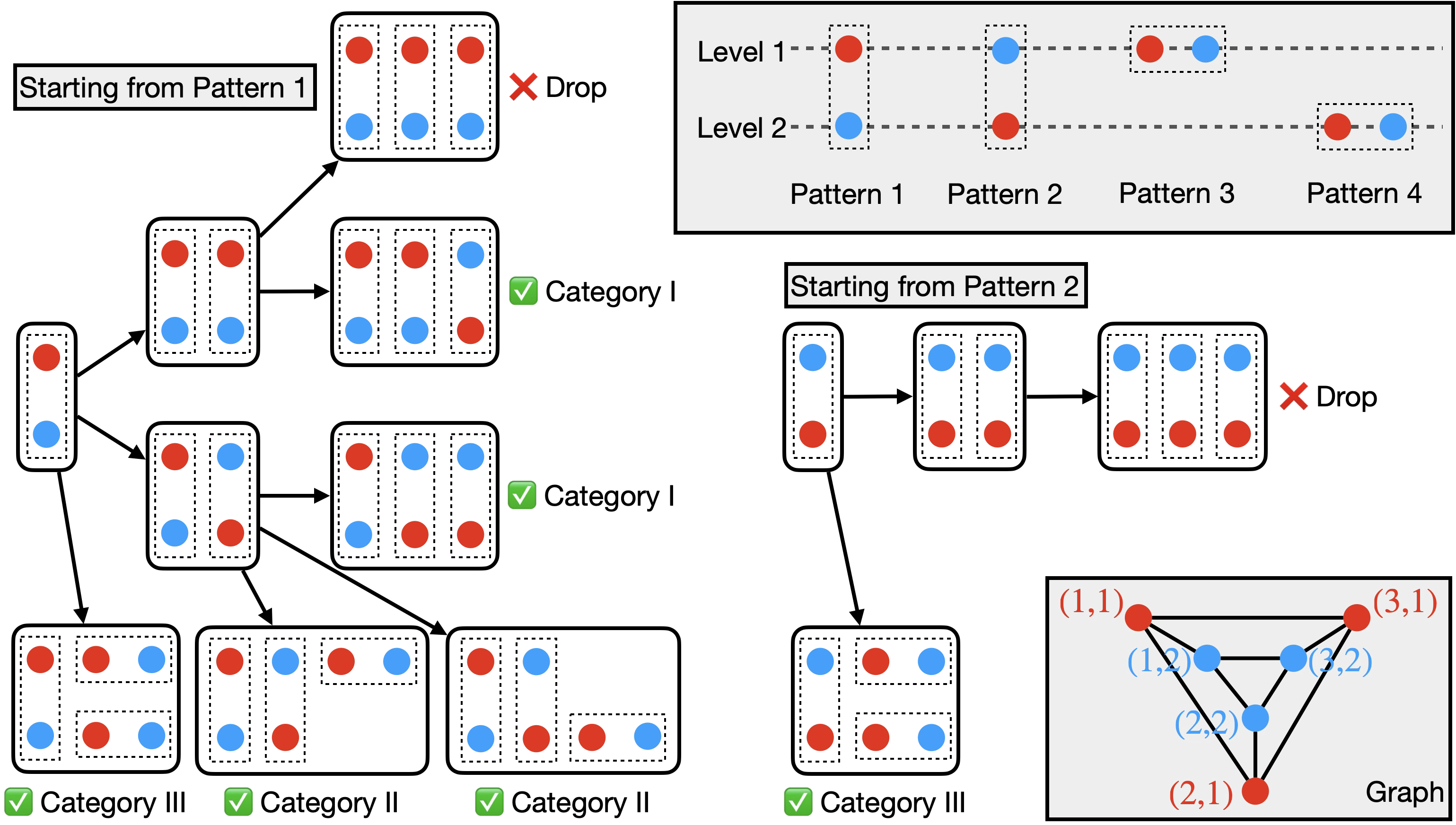}
    \caption{Illustration of Algorithm for $X = \{1,2,3\}$ and $Y = \{1,2\}$}
    \label{fig:placeholder}
\end{figure}

\section{Discussion and Future Work}
This paper characterizes the set of all feasible distributions subject to graph-based inferential privacy by establishing a connection between extreme posteriors and strongly connected semi-chains. We show that every extreme posterior can be generated by enumerating all strongly connected 2-semi-chains and subsequently applying the unfolding operation. For differential privacy, when each dimension of the dataset is binary, there exists a unique strongly connected 2-semi-chain up to reversal. In the two-dimensional case, only 2-semi-chains exist, and we propose an efficient algorithm to generate all strongly connected 2-semi-chains without duplication.

When the underlying data structure becomes more complex, the number of extreme posteriors, and hence the number of Blackwell-undominated privacy-preserving signals, grows rapidly. Determining which of these signals attains the optimum in a specific decision problem remains an open question. Future research will explore applications of the theoretical framework developed in this paper to such problems, including optimal information design under privacy constraints.

In addition, the microfoundations of privacy concepts warrant further investigation. Understanding how individuals or institutions perceive and trade off privacy in strategic settings could guide the selection of appropriate graph structures. More broadly, extending the characterization of the Blackwell frontier to other privacy notions beyond graph-based or differential privacy remains an important avenue for future work.

\newpage

\bibliographystyle{chicagoa}
\bibliography{refs}

\appendix

\section{Proofs}
\subsection*{Proof of Lemma \ref{lemma:extreme_points}}

Let $\mathcal{M}$ be a compact convex subset of $\Delta(\Theta)$, where $\Theta$ with its $\sigma$-algebra is a standard Borel space. 
To ensure the analysis is non-trivial, we assume that $\mu_0 \in \mathcal{M} \setminus \operatorname{ext}\mathcal{M}$.

In this subsection, we prove Lemma~\ref{lemma:extreme_points} for this more general $\mathcal{M}$.

\begin{proof}[Proof of Lemma \ref{lemma:extreme_points}]
    We first show (ii): A signal $\pi \in \overline{\Pi}_{\mathcal{M}}$ if and only if $\left<\pi\right>(\operatorname{ext}\mathcal{M}) = 1$. 
    ``If''. Suppose there exists another $\pi' \in \Pi_{\mathcal{M}}$ such that $\pi \prec \pi'$, then there exists a nondegenerate dilation $K: \Delta(\Theta) \to \Delta(\Delta(\Theta))$ such that for almost every $\mu \in \operatorname{supp}(\left<\pi\right>)$, $$\mu = \int_{\Delta(\Theta)} \nu K(d\nu |\mu).$$
    This means that for almost every $\mu \in \operatorname{supp}(\left<\pi\right>)$, it can be expressed by linear combination of $\nu \in \operatorname{supp}(\left<\pi'\right>)$.
    Since $K$ is nondegenerate, there exists a positive measure subset $A \subseteq \operatorname{supp}(\left<\pi\right>)$ such that $K(\mu|\mu) < 1$ for any $\mu \in A$. Hence, there is a positive measure subset $A$ such that $\mu \notin \operatorname{ext}\mathcal{M}$ for any $\mu \in E$, which is contradicted with $\left<\pi\right>(\operatorname{ext}\mathcal{M}) = 1$.

    ``Only if''. We construct a dilation from $\mathcal{M}$ to $\Delta(\operatorname{ext} \mathcal{M})$. Since $\Theta \subseteq \mathbb{R}^n$, $\Delta(\Theta)$ embeds into a locally convex space and endowed with the topology of weak convergence is metrizable. $\mathcal{M}$ is a compact convex subset of $\Delta(\Theta)$, which  is also metrizable. By Choquet’s Theorem (Theorem 10.7, p.168, \cite{simon2011convexity}), for any $\mu \in \mathcal{M}$, the set
    $$\Gamma(\mu) := \Bigl\{ P_\mu \in \Delta(\operatorname{ext}\mathcal{M}) : \mu = \int \nu  dP_\mu(\nu) \Bigr\}$$
    is nonempty. Moreover, $\Gamma(\mu)$ is closed in the weak-* topology. Define the barycenter map $B : \Delta(\mathcal{M}) \to \mathcal{M}$ by $B(P_\mu) = \int \nu  dP_\mu = \mu$. By \cite{simon2011convexity}, Theorem 9.1 (p.136), the map $B$ is continuous. Consequently, for any open set $U \subseteq \Delta(\operatorname{ext}\mathcal{M})$, $\Gamma^{-1}(U) = \{\mu \in \mathcal{M} : B^{-1}({\mu}) \cap U \neq \emptyset\} = B(U)$,
    which is an open set. Therefore, by the Kuratowski–Ryll-Nardzewski measurable selection theorem (Theorem 6.9.3, p.36, Vol. II, \cite{bogachev2007measure}), there exists a measurable selection $P_\mu^* : \mathcal{M} \to \Delta(\operatorname{ext}\mathcal{M})$ such that $\mu = \int \nu  dP_\mu^*(\nu)$ Hence, the map $D : \mu \mapsto P_\mu^*$ defines a dilation. If $\left<\pi\right>(\operatorname{ext}\mathcal{M}) \neq 1$, then $D$ is nondegenerate, which implies $\pi \notin \overline{\Pi}_{\mathcal{M}}$.

    For (i): A signal is ex post privacy-preserving if and only if it is Blackwell-dominated by some signal in $\overline{\Pi}_{\mathcal{M}}$. The ``if'' direction is immediate from the convexity of $\mathcal{M}$. The ``only if'' direction follows by the same argument used for the ``only if'' part of (ii).
\end{proof}

\subsection*{Proof of Theorem~\ref{thm:semi-chain}}

\begin{proof}[Proof of Lemma~\ref{lemma:feasible_w}]
    ``Only if''. By the definition, $\mu \in \mathcal{M}_\mathbf{G}^\varepsilon$ if
    \begin{equation}\label{eq:definition_M}
        e^{-\varepsilon}\frac{\mu(\theta_j)}{\mu_0(\theta_j)}\leq\frac{\mu(\theta_{i})}{\mu_0(\theta_{i})} \leq e^\varepsilon\frac{\mu(\theta_{j})}{\mu_0(\theta_{j})}, \quad \forall (\theta_i,\theta_j) \in \mathbf{G}, 
    \end{equation}
    which can be expressed as
    \begin{equation*}
        \frac{\mu(\theta_j)}{\mu_0(\theta_j)} = e^{w_{ij}\varepsilon} \frac{\mu(\theta_i)}{\mu_0(\theta_i)}, \text{ and } w_{ij} \in [-1,1], \quad \forall (\theta_i,\theta_j)\in \mathbf{G}.
    \end{equation*}
    Since $\mathbf{G}$ is connected, for any $(\theta_{j_1}, \theta_{j_r}) \in \Theta^2$, there is a path in $\mathbf{G}$, $\theta_{j_1}, \theta_{j_2}, \ldots, \theta_{j_r}$, such that
    \begin{equation*}
        \frac{\mu(\theta_{j_1})}{\mu_0(\theta_{j_1})} = e^{ \sum_{r'=1}^{r-1} w_{j_{r'}j_{r'+1}}\varepsilon} \frac{\mu(\theta_{j_r})}{\mu_0(\theta_{j_r})}.
    \end{equation*}
    If there is a $\theta_i \in \Theta$ such that $\mu(\theta_i) = 0$ then $\mu(\theta_j) = 0$ for all $\theta_j \in \Theta$ which contradicts that $\mu \in \Delta(\Theta)$. Therefore, $\delta_j = \ln \mu(\theta_j) - \ln \mu_0(\theta_j)$ is well-defined for all $\theta_j \in \Theta$. $\mu \in \mathcal{M}_{\mathbf{G}}^\varepsilon$ implies \eqref{eq:constraints} is valid and \eqref{eq:con_1} holds. Moreover, it is obvious  that \eqref{eq:con_2} $w_{ij}\varepsilon = \delta_i - \delta_j = -w_{ji} \varepsilon$ and \eqref{eq:con_3} $w_{j_1 j_r} \varepsilon= \delta_{j_r} - \delta_{j_1} = \sum_{r'=1}^{r-1}(\delta_{j_{r'+1}} - \delta_{j_{r'}}) = \sum_{r'=1}^{r-1}w_{j_{r'}j_{r'+1}} \varepsilon$. 
    
    ``If''. Suppose $\mathbf{W}$ satisfies these conditions. Consider a sequence $w_{j_1 j_2}, w_{j_2 j_3}, \ldots, w_{j_{J-1} j_J}$. The following $J$ independent linear constraints determine a unique $\hat{\mu}$:
    \begin{align}
        &\frac{\mu(\theta_{j_{r+1}})}{\mu_0(\theta_{j_{r+1}})}  - e^{w_{j_rj_{r+1}}\varepsilon}\frac{\mu(\theta_{j_r})}{\mu_0(\theta_{j_r})} = 0, \quad \forall r \in \{1,\ldots, J-1\}, \label{eq:constraint_2_prime}\\
        &\sum_{j=1}^N \mu(\theta_j) = 1, \label{eq:constraint_sum}
    \end{align}
    If there exists $\theta_j \in \Theta$ such that $\hat{\mu}(\theta_j) \leq 0$, then \eqref{eq:constraint_2_prime} will induces that $\hat{\mu}(\theta) \leq 0$ for all $\theta \in \Theta$, which contradicts with \eqref{eq:constraint_sum}. Thus, \eqref{eq:constraint_2_prime} can be rewritten as \eqref{eq:constraints} and  $\hat{\mu} \in \Delta(\Theta)$. Denote $\hat{\mathbf{W}}$ be the matrix induce by $\hat{\mu}$ which satisfies \eqref{eq:con_2}, \eqref{eq:con_3} and $\hat{w}_{j_rj_{r+1}} = w_{j_r j_{r+1}}$ for all $r \in \{1, \ldots, J-1\}$. Given the values of $w_{j_1 j_2}, w_{j_2 j_3}, \ldots, w_{j_{J-1} j_J}$, all other entries of $\mathbf{W}$ are endogenously and uniquely determined by \eqref{eq:con_2} and \eqref{eq:con_3}.
    Therefore, $\hat{\mathbf{W}} = \mathbf{W}$, then $\mathbf{W}$ can be obtained by $\hat{\mu}$. Moreover, since $\hat{\mathbf{W}} = \mathbf{W}$ satisfies \eqref{eq:con_1}, $\hat{\mu} \in \mathcal{M}_\mathbf{G}^\varepsilon$. 
\end{proof}

\begin{proof}[Proof of Lemma \ref{lemma:connecteed_acyclic}]
    By the definition, $\mu \in \mathcal{M}_\mathbf{G}^\varepsilon$ if and only if \eqref{eq:definition_M} (which induces $\mu(\theta_j) > 0$ for all $\theta_j \in \Theta$) and \eqref{eq:constraint_sum}.
    Since $\mu \in \mathbb{R}^J$,  $\mu \in \mathcal{M}_\mathbf{G}^\varepsilon $ is an extreme point of $\mathcal{M}_\mathbf{G}^\varepsilon$ if and only if a total of $J$ constraints are binding in \eqref{eq:definition_M} and \eqref{eq:constraint_sum} (Theorem 3.1, pp.102, \cite{bazaraa2011linear}). Because \eqref{eq:constraint_sum} is always binding, it remains to identify $J-1$ binding constraints from \eqref{eq:definition_M}. For each pair of constraints in \eqref{eq:definition_M}, $e^{-\varepsilon}\frac{\mu(\theta_j)}{\mu_0(\theta_j)}\leq\frac{\mu(\theta_{i})}{\mu_0(\theta_{i})} \leq e^\varepsilon\frac{\mu(\theta_{j})}{\mu_0(\theta_{j})}$, it corresponds to $w_{ij} = -w_{ji} \in [-1,1]$ and the two inequalities cannot be binding simultaneously. Hence, finding $J-1$ independent binding constraints in \eqref{eq:definition_M} is equivalent to identifying $J-1$ independent $w_{ij} = \pm 1$ in $\mathbf{G}$.

    Now, we show that identifying $J-1$ independent $w_{ij} = \pm 1$ for some $(\theta_i, \theta_j) \in \mathbf{G}$ is equivalent to find a spanning tree $\mathbf{T}$ of $\mathbf{G}$ such that $w_{ij} =\pm 1$ for every $(\theta_{i}, \theta_{j}) \in \mathbf{T}$. 
    
    ``If''. For a tree $\mathbf{T}$, 
    by Lemma~\ref{lemma:feasible_w}, the collection of $\mathcal{W}_\mathbf{T}:=\{w_{ij}:(\theta_i,\theta_j) \in \mathbf{T} \text{ and } i < j\}$ is not independent if and only there exists some $\hat{w} \in \mathcal{W}_\mathbf{T}$ that can be expressed as a linear combination of the others through the equations in \eqref{eq:con_2} and \eqref{eq:con_3}. By construction of $\mathcal{W}_\mathbf{T}$, the constraints in \eqref{eq:con_2} do not apply. Moreover, since $\mathbf{T}$ is acyclic, \eqref{eq:con_3} does not apply either. Thus, $\mathcal{W}_\mathbf{T}$ is independent. Because $\mathbf{T}$ has $J-1$ edges, we have $|\mathcal{W}_\mathbf{T}| = J-1$.  Hence, $\mathcal{W}_\mathbf{T}$ identifies $J-1$ independent constraints $w_{ij} = \pm 1$ for $(\theta_i, \theta_j) \in \mathbf{T}$.
    
    ``Only if''. Suppose no such spanning tree exists. Assume there are $J-1$ pairs of constraints $w_{ij} = \pm 1 = -w_{ji}$ for some $(\theta_i, \theta_j) \in \mathbf{G}$; otherwise, it is trivial. 
    Consider any subgraph with $J-1$ edges such that $w_{ij} = \pm 1$ for every edge. Such a subgraph necessarily contains a cycle.  By \eqref{eq:con_3}, within this cycle one weight can be written as a linear combination of the others. Hence, the cycle reduces the number of independent constraints, so the subgraph yields at most $J-2$ independent weights.
\end{proof}

\begin{lemma}\label{lemma:pm1_0}
    Suppose $\mathbf{W}$ corresponds to some $\mu\in \mathcal{M}_\mathbf{G}^\varepsilon$. If $\mu \in \operatorname{ext}\mathcal{M}_\mathbf{G}^\varepsilon$, then $w_{ij} \in \{\pm 1, 0\}$ for every $(\theta_{i}, \theta_{j}) \in \mathbf{G}$.
\end{lemma}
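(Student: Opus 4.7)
The plan is to combine Lemma~\ref{lemma:connecteed_acyclic} with the path constraint~\eqref{eq:con_3} and the boundedness constraint~\eqref{eq:con_1}. By Lemma~\ref{lemma:connecteed_acyclic}, since $\mu \in \operatorname{ext}\mathcal{M}_{\mathbf{G}}^\varepsilon$, there exists a spanning tree $\mathbf{T}$ of $\mathbf{G}$ on whose edges $\mathbf{W}$ takes values in $\{\pm 1\}$. So what remains is to show the same integrality for every edge in $\mathbf{G} \setminus \mathbf{T}$.

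For any such edge $(\theta_i, \theta_j) \in \mathbf{G} \setminus \mathbf{T}$, I would use the fact that a spanning tree is connected and acyclic, so there is a unique path $\theta_i = \theta_{j_1}, \theta_{j_2}, \ldots, \theta_{j_r} = \theta_j$ in $\mathbf{T}$ joining the two endpoints. Applying the telescoping identity~\eqref{eq:con_3} along this path yields
\begin{equation*}
    w_{ij} \;=\; \sum_{r'=1}^{r-1} w_{j_{r'} j_{r'+1}},
\end{equation*}
and every summand lies in $\{\pm 1\}$ by the spanning-tree property. Hence $w_{ij}$ is an integer.

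Finally, since $(\theta_i, \theta_j) \in \mathbf{G}$ and $\mathbf{W}$ corresponds to a feasible $\mu \in \mathcal{M}_{\mathbf{G}}^\varepsilon$, condition~\eqref{eq:con_1} forces $w_{ij} \in [-1, 1]$. An integer in $[-1, 1]$ must belong to $\{-1, 0, 1\}$, which gives the claim. I do not expect any serious obstacle here: the argument is a one-line consequence of Lemma~\ref{lemma:connecteed_acyclic} once one observes that edges outside the spanning tree are already pinned down by the tree values via~\eqref{eq:con_3}, and the only new ingredient is the trivial observation that integers inside $[-1,1]$ are exactly $\{-1,0,1\}$.
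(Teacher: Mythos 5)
Your proposal is correct and follows essentially the same route as the paper: invoke Lemma~\ref{lemma:connecteed_acyclic} to obtain a spanning tree with $\pm 1$ weights, express any off-tree edge weight as the telescoping sum along the unique tree path via~\eqref{eq:con_3}, and conclude via the bound $w_{ij}\in[-1,1]$ from~\eqref{eq:con_1}. If anything, you are slightly more explicit than the paper in spelling out the final step that an integer in $[-1,1]$ must be $-1$, $0$, or $1$.
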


\begin{proof}[Proof of Lemma~\ref{lemma:pm1_0}]
    Suppose $\mu \in \operatorname{ext} \mathcal{M}_\mathbf{G}^\varepsilon$, then according to Lemma~\ref{lemma:connecteed_acyclic}, there exists a spanning tree $\mathbf{T}$ such that $w_{ij} = \pm 1$ for all $(\theta_i, \theta_j) \in \mathbf{T}$. For any other $(\theta_{i'}, \theta_{j'}) \in \mathbf{G} \setminus \mathbf{T}$, there exists a path from $\theta_{i'}$ to $\theta_{j'}$ in $\mathbf{T}$. Thus, $w_{i'j'}$ is determined by the sum of the weights along this path. Since for any $(\theta_i, \theta_j) \in \mathbf{T}$, $w_{ij} = \pm 1$, then $w_{i'j'} \in \{\pm 1, 0\}$.
\end{proof}

\begin{lemma}\label{lemma:c_C}
    For any $L$-semi-chain $C=(\Theta_l)_{l =1}^L$, $L \geq 2$, the matrix $\mathbf{W}$ satisfying
    \begin{equation*}
        w_{i j} =
        \begin{cases}
        1 & \text{if there exists } l \in \{1,\ldots, L-1\} \text{ such that } \theta_{i} \in \Theta_l, \theta_{j} \in \Theta_{l+1} \\
        0 & \text{if there exists } l \in\{1,\ldots,L\} \text{ such that } \theta_{i}, \theta_{j} \in \Theta_l
        \end{cases}.
    \end{equation*} can correspond to some $\mu \in \mathcal{M}^\varepsilon_\mathbf{G}$.
\end{lemma}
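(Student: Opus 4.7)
The plan is to invoke Lemma~\ref{lemma:feasible_w}: it suffices to exhibit an extension of $\mathbf{W}$ to all pairs $(i,j) \in \{1,\ldots,J\}^2$ that satisfies conditions \eqref{eq:con_1}, \eqref{eq:con_2}, and \eqref{eq:con_3}. The natural candidate is suggested by the semi-chain structure: for each $\theta_i \in \Theta$, let $l(\theta_i) \in \{1,\ldots,L\}$ denote the unique level of $C$ containing $\theta_i$, and define
\begin{equation*}
    w_{ij} := l(\theta_j) - l(\theta_i), \qquad \text{for all } i,j \in \{1,\ldots,J\}.
\end{equation*}
I would first check that this definition is consistent with the values prescribed in the statement of the lemma. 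If $\theta_i, \theta_j$ lie in the same level then $w_{ij} = 0$, and if $\theta_i \in \Theta_l$ and $\theta_j \in \Theta_{l+1}$ then $w_{ij} = 1$, matching the two cases of the formula.

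Next I would verify the three conditions of Lemma~\ref{lemma:feasible_w}. Conditions \eqref{eq:con_2} and \eqref{eq:con_3} are immediate from the telescoping structure of the definition: $w_{ij} = l(\theta_j) - l(\theta_i) = -(l(\theta_i) - l(\theta_j)) = -w_{ji}$, and for any sequence $j_1,\ldots,j_r$,
\begin{equation*}
    \sum_{r'=1}^{r-1} w_{j_{r'} j_{r'+1}} = \sum_{r'=1}^{r-1} \bigl(l(\theta_{j_{r'+1}}) - l(\theta_{j_{r'}})\bigr) = l(\theta_{j_r}) - l(\theta_{j_1}) = w_{j_1 j_r}.
\end{equation*}

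The substantive step is condition \eqref{eq:con_1}, and this is where the semi-chain hypothesis is used. For any edge $(\theta_i, \theta_j) \in \mathbf{G}$, the semi-chain definition forces $\theta_j \in N(\theta_i) \subseteq \Theta_{l(\theta_i)-1} \cup \Theta_{l(\theta_i)} \cup \Theta_{l(\theta_i)+1}$, so $|l(\theta_j) - l(\theta_i)| \leq 1$, i.e.\ $w_{ij} \in \{-1,0,1\} \subseteq [-1,1]$. All three conditions of Lemma~\ref{lemma:feasible_w} are thus satisfied, so $\mathbf{W}$ is induced by some $\mu \in \mathcal{M}_\mathbf{G}^\varepsilon$. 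I do not anticipate any real obstacle here; the only thing to be slightly careful about is that the statement only specifies $\mathbf{W}$ on edges of $\mathbf{G}$, so the proof must implicitly (or explicitly) provide the values on non-edges via the level-difference formula, and point out that those non-edge entries are unconstrained by \eqref{eq:con_1}.
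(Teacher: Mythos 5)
Your proposal is correct, and it reaches the conclusion by a slightly different route than the paper. The paper specifies $\mathbf{W}$ only on within-level and adjacent-level pairs, lets \eqref{eq:con_2} and \eqref{eq:con_3} determine the remaining entries, and then spends its effort verifying that this determination is \emph{consistent}: it checks that the signed weights around any cycle of $\mathbf{G}$ sum to zero, by splitting the cycle's edges into within-level edges (each contributing $0$) and between-level edges (whose upward and downward steps in the line graph of levels must cancel). You instead short-circuit the consistency question by exhibiting the potential function explicitly, $w_{ij} = l(\theta_j) - l(\theta_i)$, which makes \eqref{eq:con_2} and \eqref{eq:con_3} hold by telescoping with no case analysis, and shifts the only substantive check to \eqref{eq:con_1}, which follows directly from the semi-chain condition $N(\theta) \subseteq \Theta_{l-1}\cup\Theta_l\cup\Theta_{l+1}$. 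These are the two standard ways of showing a discrete $1$-form is exact --- verify it is closed on cycles versus write down a primitive --- and since $\mathbf{G}$ is connected the extension forced by \eqref{eq:con_3} is unique, so your $\mathbf{W}$ coincides with the paper's. Your version is arguably cleaner and more self-contained; the paper's cycle argument has the minor advantage of not requiring one to guess the closed form of the extension. You were also right to flag that non-edge entries are unconstrained by \eqref{eq:con_1}; that is exactly the point the paper glosses over with ``automatically satisfied.''
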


\begin{proof}[Proof of Lemma \ref{lemma:c_C}]

    The remaining entries of $\mathbf{W}$ are determined by \eqref{eq:con_2} and \eqref{eq:con_3}. Since constraints \eqref{eq:con_1} and \eqref{eq:con_2} are automatically satisfied, it remains to show that constraint \eqref{eq:con_3} is also satisfied. In particular, for any cycle in the graph $\mathbf{G}$, $\theta_{j_1}, \theta_{j_2}, \ldots, \theta_{j_r}, \theta_{j_{r+1} = \theta_{j_1}}$, $\sum_{r'=1}^{r} w_{j_r' j_{r'+1}} = 0.$

    Consider an arbitrary cycle. Its edges can be divided into two types: (i) Within-level edges. The sum of their weights is always zero by construction. (ii) Between-level edges. These edges induce a cycle in the partition graph, where partitions $\{\Theta_l\}_{l=1}^L$ are vertices and a pair of vertices are linked if and only if they are adjacent levels. Thus, the partition graph is a line from top to bottom. To form a cycle in this line graph, every downward step must eventually be matched by a upward step. Since the weight $w$ of an edge from $l$-th level $\Theta_l$ to $(l-1)$-th level $\Theta_{l+1}$ is the negative of the weight of the reverse edge (from $(l-1)$-th level to $l$-th level), the total weight of edges between levels cancels out to zero. Accordingly, in any cycle, the total weight is $0$.
\end{proof}

\begin{proof}[Proof of Theorem \ref{thm:semi-chain}]
    ``If''. Due to Lemma~\ref{lemma:c_C}, such $\mathbf{W}$ can correspond to some $\mu\in\mathcal{M}^\varepsilon_\mathbf{G}$ and therefore satisfy \eqref{eq:con_1}, \eqref{eq:con_2}, \eqref{eq:con_3}. The strong connectedness of $C$ implies that the subgraph post deleting all within-level edges is connected. Note that if $\theta_i$ and $\theta_j$ are linked in the subgraph, then $w_{ij} = \pm 1$. The finite connected subgraph always has at least one spanning tree $\mathbf{T}$ (Theorem 1.5.1, pp.14, \citealt{diestel2025graph}), such that any pair $(\theta_i, \theta_j) \in \mathbf{T}$, $w_{ij} = \pm 1$. Lemma \ref{lemma:connecteed_acyclic} implies that $\mu \in \operatorname{ext} \mathcal{M}_\mathbf{G}^\varepsilon$.
    

    ``Only if''. Suppose $\mu \in \operatorname{ext} \mathcal{M}_\mathbf{G}^\varepsilon$ and let $\mathbf{W}$ be the corresponding matrix, satisfying \eqref{eq:con_1}, \eqref{eq:con_2}, \eqref{eq:con_3}. By Lemma~\ref{lemma:connecteed_acyclic}, there exists a spanning tree $\mathbf{T}$ such that $w_{ij} = \pm 1$ whenever $(\theta_i, \theta_j) \in \mathbf{T}$. Begin with an arbitrary root node $\theta_i$ and place it in some level. For each neighbor $\theta_j \in N(\theta_i)$, if $w_{ij} = 1$ then assign $\theta_j$ to the level immediately above $\theta_1$; if $w_{ij} = -1$, assign $\theta_j$ to the level immediately below $\theta_i$. Proceed inductively: for any newly assigned node, allocate its neighbors according to the same rule. Since $T$ is a spanning tree, each node is assigned exactly once, and this procedure yields a well-defined ordered partition $(\Theta_l)_{l=1}^L$.  

    Now consider any $\theta_i \in \Theta_l$ and $\theta_j \in \Theta_{l'}$. There is a path $(\theta_{i_k})^n_0$ in $\mathbf{T}$, with $\theta_{i_0} = \theta_i$, $\theta_{i_n} = \theta_j$. Regarding the path, it is required that, if $\theta_{i_k}\in \Theta_l$, then $\theta_{i_{k+1}}\in \Theta_{l-1}\cup\Theta_{l+1}$, and $w_{i_{k}i_{k+1}} = 1 (-1)$ if $\theta_{i_{k+1}}\in \Theta_{l+1} (\theta_{i_{k+1}}\in \Theta_{l-1})$. Constraint \eqref{eq:con_3} then implies that 
    \[
    w_{ij} = w_{i_0 i_{n}} = \sum^{n-1}_{k=0}{w_{i_k i_{k+1}}} = l' - l
    \]
    If $l=l^\prime$, then $w_{ij} = 0$, and if $l^\prime - l=1$, then $w_{ij} = 1$. Therefore, $\mathbf{W}$ satisfies the conditions stated in the Theorem. Besides, if $|l-l'|\geq 2$, then $(\theta_i,\theta_j)\notin \mathbf{G}$, otherwise constraint \eqref{eq:con_1} will be violated. Therefore, the ordered partition $(\Theta_l)_{l=1}^L$ forms a semi-chain. Finally, since the spanning tree is contained in the subgraph post deleting links within same partitions, the semi-chain is  strongly connected.

\end{proof}

\subsection*{Proof of Theorem~\ref{thm:2-semi-chain}}

\begin{definition}
Let $C^L = (\Theta_l)_{l = 1}^L$ be an $L$-semi-chain for some $L \geq 3$.  
\begin{enumerate}
    \item The ordered partition $(\Theta_2, \Theta_1 \cup \Theta_3, (\Theta_l)_{i=4}^L)$ is called an downward folding of $C^L$.
    \item The ordered partition $((\Theta_l)_{l=1}^{L-3}, \Theta_{L-2} \cup \Theta_L, \Theta_{L-1})$  is called an upward folding of $C^L$.
    \item The $L_{-}$-semi-chain $C^{L_{-}}$, for some $2\leq L_{-} <L$, is obtained from $C^L$ by successive downward (resp. upward) folding if there exists a sequence $C^{L_{-}}, C^{L_{-}+1}, \ldots, C^L$ such that each $C^{L'}$ is an downward (resp. upward) folding of $C^{L'+1}$, for all $L' \in \{L_{-}, \ldots, L-1\}$.
\end{enumerate}
\end{definition}

It is clear that the downward (or upward) folding of an $L$-semi-chain with $L \geq 3$ is a semi-chain, and this folding is unique.

\begin{lemma}\label{lemma:succession_connected}
    Given an $L$-semi-chain $C^{L}$ for some $L \geq 3$, $C^{L-1}$ is obtained from $C^L$ by downward (or upward) folding. $C^L$ is strongly connected if and only if $C^{L-1}$ is strongly connected.
\end{lemma}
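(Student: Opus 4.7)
The plan is to reduce the strong connectedness of both $C^L$ and $C^{L-1}$ to the connectedness of the \emph{same} subgraph of $\mathbf{G}$. By definition, a semi-chain is strongly connected exactly when deleting all of its within-level edges leaves $\mathbf{G}$ connected; so it suffices to show that the folding operation preserves the partition of the edges of $\mathbf{G}$ into within-level and between-level pieces. Once this is established, the two conditions become the same statement about $\mathbf{G}$ minus a fixed edge set.

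The pivotal observation is that the semi-chain property of $C^L$ forbids edges between non-adjacent levels: for any $\theta \in \Theta_l$, one has $N(\theta) \subseteq \Theta_{l-1} \cup \Theta_l \cup \Theta_{l+1}$, so $E(\Theta_i, \Theta_j) = \emptyset$ whenever $|i - j| \geq 2$, where $E(A, B)$ denotes the edges of $\mathbf{G}$ between $A$ and $B$. This single fact is what kills all the cross-terms introduced by merging two non-adjacent levels into one new level.

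I first treat the downward folding $C^{L-1} = (\Theta_2,\, \Theta_1 \cup \Theta_3,\, \Theta_4,\, \ldots,\, \Theta_L)$ by directly enumerating between-level edges. Those of $C^L$ form $\bigcup_{l=1}^{L-1} E(\Theta_l, \Theta_{l+1})$, while those of $C^{L-1}$ form
$E(\Theta_2,\, \Theta_1 \cup \Theta_3) \cup E(\Theta_1 \cup \Theta_3,\, \Theta_4) \cup \bigcup_{l=4}^{L-1} E(\Theta_l, \Theta_{l+1})$. Expanding each union and invoking $E(\Theta_1, \Theta_4) = \emptyset$ from the pivotal observation, the two collections coincide; therefore the within-level edges coincide as well (they form the complementary edge set within $\mathbf{G}$), and deleting them leaves identical subgraphs of $\mathbf{G}$ in either case. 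The biconditional follows immediately.

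The upward folding $C^{L-1} = ((\Theta_l)_{l=1}^{L-3},\, \Theta_{L-2} \cup \Theta_L,\, \Theta_{L-1})$ is symmetric: the analogous phantom edges lie in $E(\Theta_{L-2}, \Theta_L)$ (which would be absorbed as a within-level edge of the new level $\Theta_{L-2} \cup \Theta_L$) and in $E(\Theta_{L-3}, \Theta_L)$ (a putative between-level edge across new levels $L-3$ and $L-2$), both of which vanish by non-adjacency in $C^L$. I do not anticipate a substantive obstacle; the only step requiring care is the bookkeeping at the boundary, in particular the degenerate case $L = 3$, where the folded object has only two levels and several index ranges in the unions collapse, but the same enumeration argument still applies verbatim.
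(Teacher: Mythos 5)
Your proof is correct and rests on the same key observation as the paper's: because a semi-chain admits no edges between non-adjacent levels, folding never converts a between-level edge into a within-level edge (or vice versa), so the graph obtained by deleting within-level edges is literally the same for $C^L$ and $C^{L-1}$. Your version merely makes the paper's path-preservation argument more explicit by computing the two between-level edge sets and checking they coincide, including the degenerate bookkeeping at $L=3$.
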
  
\begin{proof}[Proof of Lemma~\ref{lemma:succession_connected}]
    W.l.o.g., we restrict attention to downward folding.  Let $C^L=(\Theta_l)_{l=1}^L$, then $C^{L-1} = (\Theta_2, \Theta_1\cup \Theta_3, (\Theta_l)_{l=4}^L)$. Suppose $C^L$ is strongly connected. After folding $\Theta_1$ into $\Theta_3$, the only modification is that the signs of the edges between $\Theta_1$ and $\Theta_2$ are reversed. Importantly, no between-level edge is converted into a within-level edge. Hence, every path that certifies strong connectivity in $C^L$ remains valid in $C^{L-1}$. Therefore, $C^{L-1}$ is strongly connected. Conversely, suppose $C^{L-1}$ is strongly connected. When unfolding by placing $\Theta_1$ back into the first level, by the definition of $C^L$, $\Theta_1$ is adjacent only to $\Theta_2$ and has no edges to $\Theta_3 \cup \Theta_4$. Thus, reinstating $\Theta_1$ does not create any within-level edges and preserves all existing between-level edges. Consequently, strong connectivity of $C^{L-1}$ implies strong connectivity of $C^L$.
\end{proof}

\begin{lemma}\label{lemma:split_fold}
    $C^{L+1}$ is an upward (resp. downward) unfolding of $C^L$ if and only if $C^L$ is the downward (resp. upward) folding of $C^{L+1}$.
\end{lemma}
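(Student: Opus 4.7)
The plan is to verify the equivalence directly from the definitions: the upward unfolding and downward folding operations are mutually inverse, and the same is true for the downward unfolding / upward folding pair. Since the two cases are completely symmetric, I will carry out only the upward/downward pair and remark that the other follows by reversing the order of levels.

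First, I would start from the direction ``$C^{L+1}$ is an upward unfolding of $C^L$ $\Rightarrow$ $C^L$ is the downward folding of $C^{L+1}$.'' Writing $C^L = (\Theta_l)_{l=1}^L$ and the upward unfolding as $C^{L+1} = (\Theta_2^1, \Theta_1, \Theta_2^2, \Theta_3, \ldots, \Theta_L)$, I would just apply the downward folding rule, which merges the first and third levels of $C^{L+1}$ and places the second level on top. This yields $(\Theta_1, \Theta_2^1 \cup \Theta_2^2, \Theta_3, \ldots, \Theta_L) = (\Theta_1, \Theta_2, \Theta_3, \ldots, \Theta_L) = C^L$, using that $\{\Theta_2^1, \Theta_2^2\}$ partitions $\Theta_2$. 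No semi-chain conditions need to be rechecked here.

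For the converse, suppose $C^L$ is the downward folding of $C^{L+1} = (\Theta'_l)_{l=1}^{L+1}$, so
$C^L = (\Theta'_2, \Theta'_1 \cup \Theta'_3, \Theta'_4, \ldots, \Theta'_{L+1})$. I would define $\Theta_2^1 := \Theta'_1$ and $\Theta_2^2 := \Theta'_3$, and identify $\Theta_1 = \Theta'_2$, $\Theta_l = \Theta'_{l+1}$ for $l \geq 3$. Then the candidate unfolding $(\Theta_2^1, \Theta_1, \Theta_2^2, \Theta_3, \ldots, \Theta_L)$ reproduces $C^{L+1}$ level by level. To confirm it is genuinely an upward unfolding, I need to check the no-edge requirement: $(\theta_i,\theta_j) \notin \mathbf{G}$ for all $\theta_i \in \Theta_2^1 = \Theta'_1$ and $\theta_j \in \Theta_2^2 \cup \Theta_3 = \Theta'_3 \cup \Theta'_4$. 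This is immediate from the semi-chain property of $C^{L+1}$, since the neighbors of any node in $\Theta'_1$ lie in $\Theta'_1 \cup \Theta'_2$, which is disjoint from $\Theta'_3 \cup \Theta'_4$.

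The downward-unfolding / upward-folding case is handled identically by working on the top levels instead of the bottom, so no new ideas are needed. There is no real obstacle in this lemma beyond bookkeeping; the only point that requires a brief justification is the no-edge condition in the converse direction, and that drops out of the definition of a semi-chain applied to $C^{L+1}$.
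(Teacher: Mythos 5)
Your proposal is correct and follows essentially the same route as the paper: both directions are verified by showing that upward unfolding and downward folding are mutually inverse on the level structure, with the only substantive check being the no-edge condition in the converse direction, which you (like the paper) derive from the semi-chain property of $C^{L+1}$ forcing neighbors of the first level to lie in the first two levels.
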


\begin{proof}[Proof of Lemma~\ref{lemma:split_fold}]
    Let $C^{L+1} = (\Theta_l)^L_{l=1}$ and $C^L = (\hat{\Theta}_l)_{l=1}^L$. 
    Suppose $C^{L+1}$ is an upward unfolding of $C^L$. Then there exists a partition $\{\hat{\Theta}_2^1, \hat{\Theta}_2^2\}$ such that $\Theta_1 = \hat{\Theta}_2^1$ and $\Theta_3 = \hat{\Theta}_2^2$. Then, by merging $\Theta_1$ back into $\Theta_3$, we recover $C^L$. Conversely, suppose $C^L$ is obtained from $C^{L+1}$ by downward folding. Then
    $\hat{\Theta}_2 = \Theta_1 \cup \Theta_3$.
    By the definition of $C^{L+1}$, there is no edge from $\Theta_1$ to $\Theta_3 \cup (\hat{\Theta}_3 = \Theta_4)$. Hence, $C^{L+1}$ can be obtained from $C^L$ by dividing $\hat{\Theta}_2$ into $\Theta_1$ and $\Theta_3$, and then assigning $\Theta_1$ as the new first level.
\end{proof}

\begin{proof}[Proof of Theorem \ref{thm:2-semi-chain}]
    Through recursion, it is easy to show that an ordered partition $(\Theta_l)_{l=1}^L$ obtained from a $2$-semi-chain by successive upward unfolding is a semi-chain. For any $C^L$, there exists a sequence of decreasing-level semi-chain $C^{L-1}, C^{L-2}, \ldots, C^{2}$, where each $C^{L'}$ is obtained from $C^{L'+1}$ by downward folding. By Lemma~\ref{lemma:split_fold}, it follows that $C^L$ can be obtained by successive upward unfolding starting from $C^2$. Moreover, by Lemma~\ref{lemma:succession_connected}, $C^L$ is strongly connected if and only if the $C^2$ from which it is generated is strongly connected.
\end{proof}

\subsection*{Proof of Proposition~\ref{prop:constrain_level_dp}}

\begin{lemma}\label{lemma:K+1_dp}
    For the differential graph $\mathbf{G}_{\mathrm{diff}}$ in $K$-dimension case, there exists a strongly connected $(K+1)$-semi-chain.
\end{lemma}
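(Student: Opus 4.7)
The plan is to prove this by induction on the dimensionality $K$, formalizing the construction already sketched in Figure~\ref{fig:dp_construction_m+1}. For the base case $K=1$, the differential graph $\mathbf{G}_{\mathrm{diff}}$ is the complete graph on $\Theta^{(1)}$, so any partition of $\Theta^{(1)}$ into two non-empty subsets yields a $2$-semi-chain whose between-level edges form a complete bipartite graph; in particular it is strongly connected. For the inductive step, I assume that on $\prod_{k=1}^{K-1}\Theta^{(k)}$ there exists a strongly connected $K$-semi-chain $\hat{C}^K = (\hat{\Theta}_l)_{l=1}^K$, and I build a strongly connected $(K+1)$-semi-chain of $\mathbf{G}_{\mathrm{diff}}$ on $\prod_{k=1}^K \Theta^{(k)}$.

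For the construction, fix a reference element $a^* \in \Theta^{(K)}$ and, for every $a \in \Theta^{(K)}$, embed the $(K-1)$-dimensional semi-chain by setting $\hat{\Theta}_l^{(a)} := \{(\theta^{(-K)}, a) : \theta^{(-K)} \in \hat{\Theta}_l\}$. I define $C^{K+1} = (\Theta_l)_{l=1}^{K+1}$ by shifting the copy indexed by $a^*$ upward by one level:
\begin{equation*}
\Theta_1 := \bigcup_{a \neq a^*} \hat{\Theta}_1^{(a)}, \qquad \Theta_l := \hat{\Theta}_{l-1}^{(a^*)} \cup \bigcup_{a \neq a^*} \hat{\Theta}_l^{(a)} \text{ for } 2 \leq l \leq K, \qquad \Theta_{K+1} := \hat{\Theta}_K^{(a^*)}.
\end{equation*}
This is clearly an ordered partition of $\Theta$ with $K+1$ non-empty levels (using $|\Theta^{(K)}| \geq 2$).

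Next I verify the semi-chain property by splitting every edge of $\mathbf{G}_{\mathrm{diff}}$ into two types. An \emph{intra-copy} edge joins nodes with the same last coordinate $a$, with the first $K-1$ coordinates differing in exactly one position; restricted to a single copy, these edges reproduce the $(K-1)$-dimensional differential graph, and since the semi-chain structure of $\hat{C}^K$ is preserved under shifting all levels by a common offset, intra-copy edges link only equal or adjacent levels of $C^{K+1}$. An \emph{inter-copy} edge joins $(\theta^{(-K)}, a)$ and $(\theta^{(-K)}, a')$ with $a \neq a'$; if both $a, a' \neq a^*$ they lie in the same level $l$, and if exactly one equals $a^*$ they lie in adjacent levels $l$ and $l+1$, so again only equal or adjacent levels are linked.

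For strong connectivity, I delete the within-level edges of $C^{K+1}$ and argue that the remaining graph is still connected. Within each copy what survives are the between-level edges of $\hat{C}^K$, which by the inductive hypothesis connect all of $\hat{\Theta}_l^{(a)}$ for $l=1,\ldots,K$. Moreover, every inter-copy edge incident to the shifted copy is a between-level edge of $C^{K+1}$, so the copy indexed by $a^*$ is linked via surviving edges to every other copy. Together these two observations show the deleted graph is connected, so $C^{K+1}$ is strongly connected. The main obstacle is the careful case analysis for the semi-chain property — in particular, confirming that no inter-copy edge jumps between non-adjacent levels — and verifying that intra-copy within-level edges of $\hat{C}^K$ remain the only within-level edges of $C^{K+1}$ inside each copy, so that the inductive connectivity hypothesis can be invoked unchanged.
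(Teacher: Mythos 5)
Your proof is correct and takes essentially the same route as the paper: induction on $K$ with the base case using completeness of the one-dimensional differential graph, and the inductive step duplicating the $(K-1)$-dimensional strongly connected semi-chain across the copies indexed by $\Theta^{(K)}$ and shifting one group of copies by one level, then checking intra-copy and inter-copy edges separately for both the semi-chain property and strong connectivity. The only cosmetic difference is which group of copies gets shifted (you shift the single copy $a^*$ upward, the paper shifts all copies other than $\hat{\theta}^{(m)}$), which is immaterial.
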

\begin{proof}[Proof of Lemma~\ref{lemma:K+1_dp}]
    The proof is based on mathematical induction and constructive methods. Some notations are introduced beforehand. Within this proof, denote $\Theta^{[m]}:=\prod^m_{k=1}\Theta^{(k)}$ the set of $m$-dimensional states, with a typical element $\theta^{[m]} = (\theta^{(l)})^m_{l=1}$.  
    
    Case $K=1$. Clearly, choosing any single point in $\Theta$ as the first level and placing the remaining points in the second level yields a 2-semi-chain. Moreover, since when $K=1$, $\mathbf{G}_{\mathrm{diff}}$ is complete, this 2-semi-chain is strongly connected.

    Suppose for $K = m-1$, there exists a strongly connected $m$-semi-chain,
    denoted by $C^m = (\Theta^{[m-1]}_l)_{l=1}^m$. For $K = m$, fix $\hat{\theta}^{(m)} \in \Theta^{(m)}$ and construct $C^{m+1} = (\Theta^{[m]}_l)_{l=1}^{m+1}$ as follows: 
\begin{align*}
    \Theta_{1}^{[m]} & =\left\{ (\theta^{[m-1]},\hat{\theta}^{(m)}):\theta^{[m-1]}\in\Theta_{1}^{[m-1]}\right\} ,\\
    \Theta_{l}^{[m]}=\left\{ (\theta^{[m-1]},\hat{\theta}^{(m)}):\theta^{[m-1]}\in\Theta_{l}^{[m-1]}\right\}  & \bigcup\left\{ (\theta^{[m-1]},\theta^{(m)}):\theta^{[m-1]}\in\Theta_{l-1}^{[m-1]},\theta^{(m)}\ne\hat{\theta}^{(m)}\right\} ,\forall l\in\{2,\cdot\cdot\cdot,m\}\\
    \Theta_{m+1}^{[m]} & =\left\{ (\theta^{[m-1]},\theta^{(m)}):\theta^{[m-1]}\in\Theta_{m}^{[m-1]},\theta^{(m)}\ne\hat{\theta}^{(m)}\right\} 
\end{align*}
Fix some $\tilde{\theta}^{(m)} \in \Theta^{(m)}$, denote 
\[
C_{\tilde{\theta}^{(m)}}^{m+1}:=\left\{ \Theta_{l}^{[m]}\cap\{\theta^{[m]}:\theta^{(m)}=\tilde{\theta}^{(m)}\}\right\} 
\]
the ordered partition restricting to nodes with $m$-th coordinate $\tilde{\theta}^{(m)}$. By construction, we have that 
\[
\begin{cases}
(\theta^{[m-1]},\theta^{(m)})\in\Theta_{l+1}^{[m]}\ \iff\theta^{[m-1]}\in\Theta_{l}^{[m-1]} & \text{if }\theta^{(m)}\ne\hat{\theta}^{(m)}\\
(\theta^{[m-1]},\hat{\theta}^{(m)})\in\Theta_{l}^{[m]}\ \iff\theta^{[m-1]}\in\Theta_{l}^{[m-1]}
\end{cases}
\]
In this sense, $C_{\tilde{\theta}^{(m)}}^{m+1}$ (for the subgraph of $G$, formed by nodes with $m$-th coordinate $\tilde{\theta}^{(m)}$) is of same structure as $C^m$, for any $\tilde{\theta}^{(m)}\in \Theta^{(m)}$. Therefore, the edge between a pair with fixed $m$-th coordinate is either within same level or between adjacent levels, since $C^m$ is a semi-chain. On the other hand, the edge between a pair with different $m$-th coordinates (and therefore same first $m-1$ coordinate) is either within same level (if both $m$-th coordinates differs from $\hat{\theta}^{(m)}$) or between adjacent levels (if one of the $m$-th coordinates is $\hat{\theta}^{(m)}$). Accordingly, the ordered partition $C^{m+1}$ is an $(m+1)$-semi-chain. 

It remains to prove the strong connectivity of $C^{m+1}$. On one hand,  $C_{\tilde{\theta}^{(m)}}^{m+1}$ inherits the strong connectivity from $C^m$. On the other hand, there is an edge between $(\theta^{[m-1]},\theta^{(m)})$ and $(\theta^{[m-1]},\hat{\theta^{(m)}})$, which is between adjacent levels. Accordingly, $C^{m+1}$ is strongly connected.



    
\end{proof}

\begin{proof}[Proof of Proposition~\ref{prop:constrain_level_dp}]
    ``If''. By Lemma~\ref{lemma:K+1_dp}, a strongly connected $(K+1)$-semi-chain exists. By Lemma~\ref{lemma:succession_connected}, successive downward folding of this strongly connected $(K+1)$-semi-chain generates a strongly connected $L$-semi-chain for any $L \le K$.

    ``Only if''. Suppose there exists an $L$-semi-chain. Then the shortest path connecting a node in the first level to a node in the last level must have length at least $L-1$. However, for any two points in $\Theta$, they can differ in at most $K$ dimensions, so, by the definition of $\mathbf{G}_\mathrm{diff}$, the shortest path connecting them has length at most $K$. Therefore, the maximum possible length of a semi-chain is $K+1$.
\end{proof}

\subsection*{Proof of Proposition~\ref{prp:dp_binary_unique}}

\begin{lemma}\label{lemma:transitivity_binary}
    Suppose $\mathbf{G}_{\mathrm{diff}}^{(K-1,2)}$ has a strongly connected $2$-semi-chain $(A,B)$ in which there is no within-level edges. Then, $((A,0) \cup (B,1), (B,0) \cup (A,1))$ is a $2$-semi-chain with no within-level edges of $\mathbf{G}_{\mathrm{diff}}^{(K,2)}$, where $(A,0) := \{(\theta^{(-K)}, 0) : \theta^{(-K)} \in A\}$ and similarly for others.
\end{lemma}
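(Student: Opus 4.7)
The plan is to verify directly, by a case analysis on edges, that the proposed ordered partition has no within-level edges; the 2-semi-chain property then follows automatically, since for $L=2$ the neighborhood condition is vacuous (as $\Theta_0 = \Theta_3 = \emptyset$), so any ordered partition of $\Theta$ into two non-empty sets is a 2-semi-chain. Non-emptiness of the two proposed levels is immediate from $A, B \neq \emptyset$.

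The key observation driving the proof is that the level assignment is designed with an XOR-like structure: a node $(\theta^{(-K)}, \theta^{(K)})$ lies in the first level iff exactly one of ``$\theta^{(-K)} \in B$'' and ``$\theta^{(K)} = 1$'' holds. Any edge in $\mathbf{G}_{\mathrm{diff}}^{(K,2)}$ flips exactly one coordinate, so I would like to argue that each such flip toggles exactly one of these two binary indicators, hence swaps the level. First I would split edges into two types: (i) edges that flip only the last coordinate, and (ii) edges that preserve the last coordinate but flip one of the first $K-1$ coordinates.

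For type (i), the endpoints share the same $\theta^{(-K)}$, so the $A/B$ membership of the first $K-1$ coordinates is unchanged while the last coordinate toggles between $0$ and $1$; by inspection one endpoint lies in $(A,0) \cup (B,1)$ and the other in $(B,0) \cup (A,1)$. For type (ii), the two endpoints share the last coordinate, but their first $K-1$ coordinates are neighbors in $\mathbf{G}_{\mathrm{diff}}^{(K-1,2)}$. Here I invoke the hypothesis: since $(A,B)$ has no within-level edges, the two projections must lie on opposite sides of $(A,B)$, so again one endpoint lies in $(A,0) \cup (B,1)$ and the other in $(B,0) \cup (A,1)$.

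There is no real obstacle in the argument; the only thing to double-check is that the case split is exhaustive, which holds because by the definition of the differential graph any edge flips exactly one coordinate. As a small bonus, if strong connectedness of the new 2-semi-chain is desired downstream (for use in Proposition~\ref{prp:dp_binary_unique}), it comes for free: with no within-level edges to delete, strong connectedness reduces to connectedness of $\mathbf{G}_{\mathrm{diff}}^{(K,2)}$, which holds since any two binary strings of length $K$ can be joined by flipping coordinates one at a time.
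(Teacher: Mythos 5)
Your proof is correct, and for the literal conclusion of the lemma (no within-level edges) it is essentially the paper's argument: both split the edges of $\mathbf{G}_{\mathrm{diff}}^{(K,2)}$ into those that flip the last coordinate and those that flip one of the first $K-1$ coordinates, handling the first type by inspection of the level assignment and the second by invoking the no-within-level-edge hypothesis on $(A,B)$. One small slip: with your XOR convention the \emph{second} level $(B,0)\cup(A,1)$ is the one where exactly one of ``$\theta^{(-K)}\in B$'' and ``$\theta^{(K)}=1$'' holds (for $(A,0)$ neither holds and for $(B,1)$ both hold), so your ``first level iff exactly one holds'' is reversed; this does not affect the argument, since all you use is that each coordinate flip toggles exactly one indicator and hence swaps levels. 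Where you genuinely diverge is strong connectivity: the paper proves it directly, by concatenating between-level paths inside each copy of $\mathbf{G}_{\mathrm{diff}}^{(K-1,2)}$ (using strong connectivity of $(A,B)$) with the cross edges $((\theta^{(-K)},0),(\theta^{(-K)},1))$. Your observation that strong connectivity is automatic once there are no within-level edges --- deleting within-level edges removes nothing, and the hypercube graph is connected --- is simpler and makes the strong-connectivity hypothesis on $(A,B)$ unnecessary for this step; it is exactly what the downstream induction in Proposition~\ref{prp:dp_binary_unique} needs.
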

\begin{proof}
    First, since $(A,B)$ is an ordered partition of $\{0,1\}^{n-1}$, $((A,0) \cup (B,1), (B,0) \cup (A,1))$ is an ordered partition of $\{0,1\}^n$. Moreover, any $2$-ordered partition is $2$-semi-chain.
    
    Strong connectivity: By the strong connectivity of $(A,B)$, any pair of nodes in $(A,0) \cup (B,0)$ is connected by a between-level path (i.e., a path along within-level edges), and the same holds for $(A,1) \cup (B,1)$. For any $(\theta_i^{(-K)},0) \in (A,0)$ and $(\theta_j^{(-K)},1) \in (A,1) \cup (B,1)$, there is a within-level path from $(\theta_i^{(-K)},0)$ to $(\theta_i^{(-K)},1)$, and since $(\theta_i^{(-K)},1) \in (A,1)$, it has a between-level path to $(\theta_j^{(-K)},1)$. Hence $(\theta_i^{(-K)},0)$ and $(\theta_j^{(-K)},1)$ are connected via a path consisting entirely of between-level edges.
    Thus, every node in $(A,0)$ is connected via between-level edges with any other node, and by the same reasoning the result holds for $(A,1)$, $(B,0)$ and $(B,1)$. Consequently, $((A,0) \cup (B,1), (B,0) \cup (A,1))$ is a strongly connected.

    No within-level edge: $\mathbf{G}_{\mathrm{diff}}^{(K,2)}$ can be divided into two $\mathbf{G}_{\mathrm{diff}}^{(K-1,2)}$. The edges of $\mathbf{G}_\mathrm{diff}^{(K,2)}$ within the same $\mathbf{G}_{\mathrm{diff}}^{(K-1,2)}$ are between-level and other edges connecting $\theta = (\theta^{(-K)},0)$ and $\theta' = (\theta^{(-K)},1)$ are also between-level, since $(A,0)$, $(B,1)$ are in the first level while $(A,1)$ and $(B,0)$ are in the second level.
\end{proof}

\begin{proof}[Proof of Proposition \ref{prp:dp_binary_unique}]
    When $K=1$, there are only two points in $\Theta = \{0, 1\}$. Hence, the unique (up to reversal) strongly connected 2-semi-chain is $(\{0\},\{1\})$ which satisfies the assumption in Lemma \ref{lemma:unique_binary}. Hence, since the transitivity in Lemma \ref{lemma:transitivity_binary} and uniqueness in Lemma \ref{lemma:unique_binary}, for any $K$, there is a unique strongly connected 2-semi-chain.
\end{proof}

\subsection*{Proof of Proposition~\ref{prp:dp_2dim}}

\begin{lemma}\label{lemma:divided}
    For any strongly connected 2-semi-chain of $\mathbf{G}_{\mathrm{diff}}^{2}$, there exists at least one  $\hat{x} \in X$ such that $[\hat{x}]$ is divided.
\end{lemma}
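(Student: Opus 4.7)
The plan is to argue by contradiction: suppose every $[\hat{x}]$ with $\hat{x} \in X$ is undivided in the given 2-semi-chain $(\Theta_1,\Theta_2)$, and derive that the subgraph remaining after removing within-level edges is disconnected, contradicting strong connectivity.

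First I would set up notation. Define $X_l := \{\hat{x} \in X : [\hat{x}] \subseteq \Theta_l\}$ for $l \in \{1,2\}$. Under the contradiction hypothesis, $\{X_1, X_2\}$ is a partition of $X$, and since both $\Theta_1$ and $\Theta_2$ are nonempty (as an ordered partition requires), both $X_1$ and $X_2$ are nonempty.

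Next I would classify all edges of $\mathbf{G}_{\mathrm{diff}}^{2}$ according to whether they are within-level. For any $\hat{x} \in X$, all edges inside $[\hat{x}]$ connect two nodes of the same level (since $[\hat{x}]$ is undivided), so they are within-level. For a cross-class edge, which by the structure of $\mathbf{G}_{\mathrm{diff}}^{2}$ must be of the form $\{(\hat{x}, y),(\hat{x}', y)\}$ for some $\hat{x} \neq \hat{x}'$ and $y \in Y$: it is between-level iff $\hat{x}$ and $\hat{x}'$ lie in different $X_l$'s. Hence after deleting within-level edges, the only surviving edges are $\{(\hat{x}, y),(\hat{x}', y)\}$ with $\hat{x} \in X_1$, $\hat{x}' \in X_2$, and a common $y$-coordinate.

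Finally I would observe that in the remaining graph the $y$-coordinate is preserved along every edge, so the connected components are contained in the $y$-slices $[y]$ for $y\in Y$. Since $|Y| = n_2 \geq 2$, this subgraph is disconnected, contradicting the strong connectivity of $(\Theta_1,\Theta_2)$. Therefore at least one $[\hat{x}]$ must be divided. The argument is essentially bookkeeping on the bipartite structure of cross-class edges and I do not anticipate any substantive obstacle; the only care needed is in asserting that both $X_1$ and $X_2$ are nonempty, which follows from $\Theta_1, \Theta_2 \neq \emptyset$.
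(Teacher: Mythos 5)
Your proposal is correct and follows essentially the same route as the paper: assume every $[\hat{x}]$ is undivided, note that then all edges inside the $[\hat{x}]$-classes are within-level while the surviving cross-class edges preserve the $y$-coordinate, and conclude that the graph obtained by deleting within-level edges is disconnected (since $n_2 \ge 2$), contradicting strong connectivity. Your version is if anything slightly cleaner in isolating that every surviving edge preserves $y$; the observation that $X_1,X_2$ are both nonempty is harmless but not actually needed.
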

\begin{proof}[Proof of Lemma~\ref{lemma:divided}]
    Suppose for any $\hat{x} \in X$,  every $\theta \in [\hat{x}]$ is in the same level. Then, given $\hat{x}$, for any $y, y'\in Y$, edge $((\hat{x}, y), (\hat{x},y'))$ is within-level. Suppose there is a path between $(x_1, y_1)$ and $(x_2, y_2)$, then there must exist $x' \in X$ such that edge $((x',y_1),(x',y_2))$ is in this path. Therefore, there does not exist a path from $(x_1, y_1)$ to $(x_2, y_2)$ without a within-level edge. It contradicts with strong connectivity.
\end{proof}

\begin{lemma}\label{lemma:two_division}
    After allocating any number of divided $[x]$, the induced sub-2-semi-chain is strongly connected if and only if there is at least two $[x]$, the division pattern are different.
\end{lemma}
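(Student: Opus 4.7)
The plan is to reduce strong connectivity of the sub-2-semi-chain to the connectivity of an auxiliary graph on the allocated classes. First I would recall the structural facts established just before the lemma: within any $[\hat{x}]$ the induced subgraph of $\mathbf{G}_{\mathrm{diff}}^2$ is a complete graph, while between two distinct classes $[\hat{x}_1]$ and $[\hat{x}_2]$ the only edges are the ``horizontal'' ones $\{((\hat{x}_1,y),(\hat{x}_2,y)) : y \in Y\}$. From these two facts I derive two immediate observations: (i) each divided $[x]$-class is internally strongly connected after deleting within-level edges, because every node in its first level is joined by a between-level edge to every node in its second level, and any two nodes on the same level are then linked through any node on the opposite level; and (ii) an edge $((\hat{x}_1,y),(\hat{x}_2,y))$ is between-level precisely when $y$ is placed on opposite levels by the two division patterns, so a between-level edge linking $[\hat{x}_1]$ and $[\hat{x}_2]$ exists if and only if their division patterns are different.

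Given (i) and (ii), I would introduce an auxiliary graph $H$ whose vertices are the allocated divided $[x]$-classes and whose edges connect pairs whose division patterns differ. The observations above imply that the sub-2-semi-chain is strongly connected if and only if $H$ is connected. The lemma then reduces to the statement that $H$ is connected if and only if at least two allocated classes carry different division patterns. To verify this, I would group the vertices of $H$ by their common division pattern: $H$ is then exactly the complete multipartite graph on these groups, which is connected precisely when there are at least two groups, i.e.\ when at least two allocated classes have different patterns.

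The argument is structural rather than computational, and the only nontrivial step is the characterization in (ii) of between-level inter-class edges in terms of pattern disagreement; the remainder is straightforward graph-theoretic bookkeeping. The one subtlety I would flag is the degenerate case of a single allocated divided class, where the sub-2-semi-chain is trivially strongly connected by (i) while the ``two different patterns'' condition is vacuous. Since the lemma is invoked in the Category I analysis, where at least two $[x]$-classes are allocated, I would state the reduction to connectivity of $H$ under that assumption and obtain both directions of the equivalence directly.
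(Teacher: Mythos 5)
Your proposal is correct and follows essentially the same route as the paper's proof: internal between-level connectivity of each divided class via completeness of the subgraph on $[\hat{x}]$, the observation that a horizontal edge $((\hat{x}_1,y),(\hat{x}_2,y))$ is between-level exactly when the two patterns disagree on $y$, and the fact that any further divided class must differ from at least one of two classes with distinct patterns (which is precisely your statement that the complete multipartite auxiliary graph $H$ with at least two parts is connected). Your flag of the degenerate single-class case is a fair catch of an implicit assumption the paper also makes in its ``only if'' step, but it does not change the substance of the argument.
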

\begin{proof}
    ``Only if''. Suppose there is only one division pattern, then the edges in the same $[\hat{y}]$ are within-level. Hence $(x_1, \hat{y})$ and $(x_2, \hat{y})$ are not connected by a between-level path.

    ``If''. Assume $[x_1]$ and $[x_2]$ have different division methods. First, since $[x_1]$ and $[x_2]$ are divided and subgraph on the them are complete, the nodes within the same equivalent class are connected by a between-level path. Second, since the division methods are different, there is a $\hat{y} \in Y$ such that $(x_1, \hat{y})$ and $(x_2, \hat{y})$ are in the different level. Then, $(x_1, \hat{y})$ and $(x_2, \hat{y})$ are connected by a between-level path. Therefore, the sub-2-semi-chain restricted on $[x_1]$ and $[x_2]$ are strongly connected. Furthermore, for the sequential divided $[x_l]$, its division method must be different with one of $[x_1]$ and $[x_2]$. Since the same reason, the sub-2-semi-chain restricted on these divided $[x]$ are strongly connected.
\end{proof}

\begin{lemma}\label{lemma:feasible_same_level}
    A 2-semi-chain with $m$ divided $[x]$-equivalent classes and $n_1 - m$ undivided ones in the first (resp. second) level is strongly connected if and only if, after allocating the $m$ divided $[x]$,  for every $y \in Y$, there exists $\hat{\theta}_y$ in the second (resp. first) level such that $\hat{\theta}_y \in [y]$, and all these nodes $\{\hat{\theta}_y\}_{y\in Y}$ are connected by a between-level path.
\end{lemma}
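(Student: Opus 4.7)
The plan is to characterize strong connectivity by analyzing the between-level subgraph and exploiting the asymmetry in the hypothesis that every undivided $[x]$-class sits on the first level (the second-level version is symmetric). Throughout, write $H := \{\hat{\theta}_y\}_{y \in Y}$.

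For the \emph{necessity} direction, I would fix any undivided first-level class $[x^*]$, which is available since $n_1 - m \geq 1$. For each $y \in Y$, the node $(x^*, y)$ lies in the first level and all its neighbors within $[x^*]$ are also first-level, so any between-level neighbor of $(x^*,y)$ must come from $[y]$ and have the form $(x', y)$ with $x' \neq x^*$. Strong connectivity forces at least one such neighbor to sit in the second level; pick this point as $\hat{\theta}_y$, which verifies part (a). The claim that $H$ is between-level connected then follows directly from strong connectivity applied to each pair in $H$.

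For the \emph{sufficiency} direction, assuming (a) and (b), I would show that every node is between-level connected to $H$; together with (b) this yields strong connectivity of the full 2-semi-chain. Three cases arise for an arbitrary node $(x, y)$. In case~(i), $(x, y)$ lies in an undivided first-level class $[x]$; since $\hat{\theta}_y$ is in the second level and $[x]$ is entirely first-level, $(x, y)$ and $\hat{\theta}_y$ are distinct elements of $[y]$, so the edge between them is between-level. In case~(ii), $(x, y)$ lies in the first-level part of a divided class; since $\hat{\theta}_y$ is second-level it again cannot equal $(x, y)$, and the same direct edge inside $[y]$ is between-level. In case~(iii), $(x, y)$ lies in the second-level part of a divided class $[x]$; here the direct in-$[y]$ edge to $\hat{\theta}_y$ is within-level, so I would detour through $[x]_1 \neq \emptyset$ by picking any $(x, y') \in [x]_1$, reaching it via a between-level edge inside $[x]$, and then invoking case~(ii) to reach $\hat{\theta}_{y'}$ in one further between-level hop.

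The principal obstacle is case~(iii): the natural in-$[y]$ edge from a second-level node to $\hat{\theta}_y$ stays within the second level, so the proof requires a detour through a first-level node in the same $[x]$-class. The decisive observation is that every second-level node necessarily lies in a divided $[x]$-class (since by hypothesis all undivided classes sit in the first level), which guarantees a first-level neighbor within the same class and lets case~(ii) close the argument.
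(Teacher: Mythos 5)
Your proof is correct and follows essentially the same route as the paper's: your necessity argument is the paper's contrapositive observation that a node in an undivided first-level class can only acquire a between-level edge through a second-level node of its $[y]$-class, and your sufficiency argument connects every node to a second-level witness $\hat{\theta}_y$ in at most two between-level hops, with your case~(iii) detour through $[x]_1$ playing the role of the paper's appeal to the strong connectivity of the divided part. The only cosmetic difference is that where you invoke hypothesis (b) directly to glue the witnesses $\{\hat{\theta}_y\}_{y\in Y}$ together, the paper re-derives that connectivity from the covering condition via Lemma~\ref{lemma:two_division} (the covering condition forces at least two divided classes with distinct division patterns); both are valid.
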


\begin{proof}[Proof of Lemma~\ref{lemma:feasible_same_level}]
    ``If''. Suppose after allocating all divided $[x]$, in second level, for each $\hat{y} \in  Y$, there is $\theta_{\hat{y}} \in [\hat{y}]$. Then, there are at least two divided $[x]$ with different division pattern and according to Lemma~\ref{lemma:two_division}, these divided $[x]$ are strongly connected. Furthermore, since for each point in the rest undivided $[x]$ in the first level, it connects with the corresponding $\hat{\theta}_{y}$ in the second level, the 2-semi-chain is strongly connected.

    ``Only if''. Suppose in the second level, there is no point in $[\hat{y}]$. Then, for $[x_{m+1}]$ which is totally allocated in the first level, $(x_{m+1}, \hat{y})$ only have edges with nodes in $[x_{m+1}]\cup[\hat{y}]$ which are all within the first level. Hence, 2-semi-chain is not strongly connected.
\end{proof}

\begin{lemma}\label{lemma:2_level_undivision}
    Any 2-semi-chain contains one divided $[\hat{x}_1]$, one undivided $[\hat{x}_2]$ in the first level, one undivided $[\hat{x}_3]$ in the second level, then it is strongly connected.
\end{lemma}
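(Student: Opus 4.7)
The plan is to show that every node in $\Theta$ has a between-level edge reaching a fixed ``core'' set $[\hat{x}_1]\cup[\hat{x}_2]\cup[\hat{x}_3]$, and that this core is internally connected by between-level edges. Together these give strong connectivity regardless of how the remaining $[x]$-equivalence classes are allocated.

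First I would exploit the fact that $[\hat{x}_2]$ lies entirely in level $1$ and $[\hat{x}_3]$ entirely in level $2$. For every $y\in Y$ the edge $(\hat{x}_2,y)$--$(\hat{x}_3,y)$ sits inside $[y]$ and is between-level. Moreover, for any node $(x,y)$ with $x\notin\{\hat{x}_2,\hat{x}_3\}$, whichever level $(x,y)$ lies in, the node in the opposite level among $\{(\hat{x}_2,y),(\hat{x}_3,y)\}$ is joined to $(x,y)$ by the $[y]$-edge, giving a between-level edge from $(x,y)$ into the core. This disposes of the ``peripheral'' step in one line per $y$ and does not use anything about how the other $[x]$-classes are split.

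Next I would prove the core is strongly connected by itself. Write $[\hat{x}_1]=A_1\cup A_2$ with $A_1$ in level $1$ and $A_2$ in level $2$; both are nonempty because $[\hat{x}_1]$ is divided. Since the subgraph on $[\hat{x}_1]$ is complete, every $A_1\times A_2$ pair is directly joined by a between-level edge, so $[\hat{x}_1]$ is internally between-level connected. To attach $(\hat{x}_2,y)$ and $(\hat{x}_3,y)$ to $[\hat{x}_1]$ I would do a small case split on the level of $(\hat{x}_1,y)$: if $(\hat{x}_1,y)\in A_2$ then $(\hat{x}_2,y)$--$(\hat{x}_1,y)$ is between-level, and one reaches $(\hat{x}_3,y)$ through the edge $(\hat{x}_2,y)$--$(\hat{x}_3,y)$; if $(\hat{x}_1,y)\in A_1$ the symmetric argument via $(\hat{x}_3,y)$--$(\hat{x}_1,y)$ works. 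Either way both copies attach to $[\hat{x}_1]$, so the core is between-level connected.

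Combining the two steps yields a between-level path between any two nodes of $\Theta$, which is the definition of strong connectivity. The step I expect to require the most care is the case split in the second paragraph: one must verify that for \emph{every} $y$, at least one of the two candidate bridges $(\hat{x}_2,y)$--$(\hat{x}_1,y)$ or $(\hat{x}_3,y)$--$(\hat{x}_1,y)$ is between-level, and this is exactly where both $[\hat{x}_1]$ being divided and the completeness of the subgraph on $[\hat{x}_1]$ are jointly used. No other hypothesis on the remaining $[x]$-classes is invoked, which is why the lemma's conclusion is unconditional.
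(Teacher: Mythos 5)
Your proof is correct and follows essentially the same route as the paper's: both arguments first show that the core $[\hat{x}_1]\cup[\hat{x}_2]\cup[\hat{x}_3]$ is between-level connected (using the completeness of the subgraph on the divided class $[\hat{x}_1]$ and the $[y]$-edges linking the three classes), and then observe that since $[\hat{x}_2]$ and $[\hat{x}_3]$ place a representative of every $[y]$ in each level, every remaining node has a direct between-level edge into the core. Your explicit case split on the level of $(\hat{x}_1,y)$ is a slightly tidier packaging of the paper's $Y_1$/$Y_2$ bookkeeping, but it is the same argument.
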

\begin{proof}[Proof of Lemma~\ref{lemma:2_level_undivision}]
    Since $[\hat{x}_1]$ is divided and subgraph on $[\hat{x}_1]$ is complete, all nodes in $[\hat{x}_1]$ are connected by a between-level path. Let $([\hat{x}_1]_1, [\hat{x}_1]_2)$ be the partition of $[\hat{x}_1]$. Denote $Y_1 := \{y \in Y:  (\hat{x}_1,y) \in [\hat{x}_1]_1\}$ and $Y_2 := \{y \in Y:  (\hat{x}_1,y) \in [\hat{x}_1]_2\}$. 
    
    Since the whole $[\hat{x}_2]$ is allocated in the first level, for any $y \in Y_2$, $(\hat{x}_2, y)$ is connected with $(\hat{x}_1, y)$ by a between-level path. Furthermore, for any $y'\neq y \in Y$, $(\hat{x}_1, y)$ is connected with $(\hat{x}_1, y')$ by a between-level path. Then, the nodes in $[\hat{x}_1] \cup ([\hat{x}_2] \cap Y_2)$ are connected with each other via a between-level path. For $[\hat{x}_3]$, the nodes in $[\hat{x}_3] \cap Y_1$ and $[\hat{x}_3] \cap Y_2$ are connected to the corresponding nodes in $[\hat{x}_1] \cap Y_1$ and $[\hat{x}_2] \cap Y_2$ via a between-level path, respectively. Additionally, the nodes in $[\hat{x}_2] \cap Y_1$ are connected to the nodes in $[\hat{x}_3] \cap Y_1$ via a between-level path. Therefore, the sub-2-semi-chain restricted to $[\hat{x}_1] \cup [\hat{x}_2] \cup [\hat{x}_3]$ is strongly connected.
    
    Moreover, in both levels of this sub-2-semi-chain, for any $y \in Y$, there is $\hat{\theta}_y \in [y]$. Therefore, no matter whether divide the rest $[x]$-equivalent classes or not and what division patterns, the 2-semi-chain are strongly connected.
\end{proof}

\begin{proof}[Proof of Proposition~\ref{prp:dp_2dim}]
    Assume the sequence of division patterns contains $m$ divided $[x]$-equivalent class and $n_1 - m$ undivided $[x]$-equivalent class. Given $m \leq n_1$, the algorithm generates all sequences $\{d_{t_1}, \ldots, d_{t_m}\}$ with $1 \leq t_1 \leq \cdots \leq t_m \leq 2^{n_2}-2$ (recall that $\{d_3, \ldots, d_{2^{n_2}}\}$ are the set of the ordered partitions of $Y$,  $d_{2^{n_2}-1} = (Y,\emptyset)$ and $d_{2^{n_2}} = (\emptyset, Y)$). By Lemma~\ref{lemma:two_division}, the first part of Algorithm~\ref{algorithm:2_level_structure} generates all strongly connected sequences of division patterns in Category~I. By Lemma~\ref{lemma:feasible_same_level}, the second part generates those in Category~II. Finally, by Lemma~\ref{lemma:2_level_undivision}, the third part generates those in Category~III. Moreover, since the algorithm enumerates all sequences $(d_{t_1}, \ldots, d_{t_m})$ with $1 \leq t_1 \leq \cdots \leq t_m \leq 2^{n_2}-2$ without repetition, no duplicated sequence of division patterns is produced.

\end{proof}

\end{document}